\documentclass[a4paper,11pt]{article}
\input{Preamble.tex}

\begin{document}

\pagenumbering{Roman}

\hypersetup{pageanchor=false}
\title{Connectivity augmentation is fixed-parameter tractable
\thanks{Supported by the VILLUM Foundation, Grant Number 54451, Basic Algorithms Research Copenhagen (BARC). T.K. supported by the European Union under Marie Skłodowska-Curie Actions (MSCA), project no. 101206430.}
}

\author{Tuukka Korhonen\thanks{University of Copenhagen. Emails: \texttt{tuko@di.ku.dk}, \texttt{mthorup@di.ku.dk}}
\and
Mikkel Thorup\addtocounter{footnote}{-1}\footnotemark{}
}


\maketitle

\thispagestyle{empty}

\begin{abstract}
In the vertex connectivity augmentation problem, we are given an undirected $n$-vertex graph $G$, a set of links $L \subseteq \binom{V(G)}{2} \setminus E(G)$, and integers $\lambda$ and $k$.
The task is to insert at most $k$ links from $L$ to $G$ to make $G$ $\lambda$-vertex-connected.
We show that the problem is fixed-parameter tractable (FPT) when parameterized by $\lambda$ and $k$, by giving an algorithm with running time $2^{\OO(k \log (k + \lambda))} n^{\OO(1)}$.
This improves upon a recent result of Carmesin and Ramanujan~[SODA'26], who showed that the problem is FPT parameterized by $k$ but only when~$\lambda \le 4$.

We also consider the analogous edge connectivity augmentation problem, where the goal is to make $G$ $\lambda$-edge-connected.
We show that the problem is FPT when parameterized by $k$ only, by giving an algorithm with running time $2^{\OO(k \log k)} n^{\OO(1)}$.
Previously, such results were known only under additional assumptions on the edge connectivity of $G$.
\end{abstract}

\begin{textblock}{20}(-0.5, 7.9)
\includegraphics[width=100px]{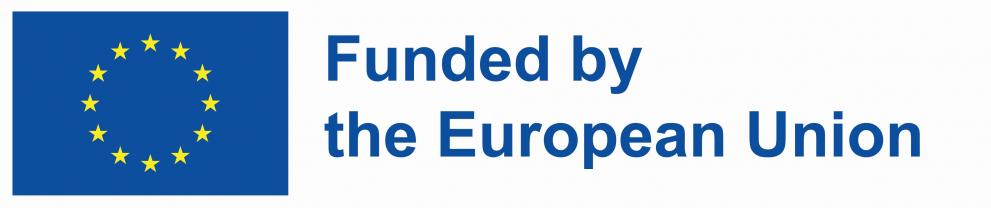}%
\end{textblock}

\thispagestyle{empty}

\newpage


\setcounter{page}{1}



\pagenumbering{arabic}

\hypersetup{pageanchor=true}

\clearpage
\setcounter{page}{1}

\section{Introduction}
In connectivity augmentation problems the task is to add a smallest-cost set of edges to a graph so that it achieves a desired connectivity value $\lambda$.
Connectivity augmentation problems were introduced by Eswaran and Tarjan~\cite{DBLP:journals/siamcomp/EswaranT76} and Plesn{\'{\i}}k~\cite{plesenik1976minimum} in the 1970s, and have a rich literature of polynomial-time exact algorithms~\cite{DBLP:journals/siamcomp/RosenthalG77,DBLP:journals/jcss/WatanabeN87,DBLP:journals/siamdm/Frank92,DBLP:journals/siamdm/Vegh11}, approximation algorithms~\cite{DBLP:journals/siamcomp/FredericksonJ81,DBLP:journals/dam/Nagamochi03,DBLP:journals/siamcomp/ByrkaGA23,hommelsheim2026better}, and parameterized algorithms~\cite{DBLP:journals/dam/Nagamochi03,DBLP:journals/jct/JacksonJ05a,DBLP:journals/networks/GuoU10,DBLP:journals/talg/MarxV15,DBLP:conf/icalp/BasavarajuFGMRS14,DBLP:conf/esa/Nutov24,aug4vc}.
In this paper, we focus on parameterized algorithms for the two most basic variants of connectivity augmentation problems that are known to be NP-hard.

A graph $G$ is $\lambda$-vertex-connected (resp. $\lambda$-edge-connected) if it is not possible to remove $< \lambda$ vertices (resp. edges) to split $G$ into more than one connected component.
In the \emph{vertex connectivity augmentation} problem, we are given a graph $G$, a set of \emph{links} $L \subseteq \binom{V(G)}{2} \setminus E(G)$, connectivity target $\lambda$, and budget $k$, and the task is to add at most $k$ links from $L$ to $E(G)$ to make $G$ $\lambda$-vertex-connected.
The \emph{edge connectivity augmentation} problem is defined similarly, but the goal is to make $G$ $\lambda$-edge-connected.

When the set $L$ includes all possible extra edges, i.e., $L = \binom{V(G)}{2} \setminus E(G)$, the connectivity augmentation problems become amenable to surprisingly powerful structural theory.
The edge-version was shown to be polynomial-time solvable in this setting by Watanabe and Nakamura~\cite{DBLP:journals/jcss/WatanabeN87} (see also~\cite{DBLP:journals/siamdm/Frank92}).
The vertex-version was shown to be fixed-parameter tractable (FPT) parameterized by $\lambda$ by Jackson and Jord{\'{a}}n~\cite{DBLP:journals/jct/JacksonJ05a}, and polynomial-time solvable when $G$ is $(\lambda-1)$-vertex-connected by V{\'{e}}gh~\cite{DBLP:journals/siamdm/Vegh11}.
It remains an open problem whether it is polynomial-time solvable or NP-hard in general.

When the set $L$ is allowed to be arbitrary the problems become harder.
It is easy to observe that both vertex and edge connectivity augmentation are NP-hard already when $\lambda = 2$, as Hamiltonian cycle reduces to such case with $E(G) = \emptyset$.
This hardness also holds when $G$ is a tree, and extends to higher values of $\lambda$~\cite{DBLP:journals/siamcomp/FredericksonJ81}.

Parameterized algorithms for the case of arbitrary $L$ were first considered by Guo and Uhlmann~\cite{DBLP:journals/networks/GuoU10}, who observed that the work of Nagamochi~\cite{DBLP:journals/dam/Nagamochi03} implies that edge connectivity augmentation is FPT parameterized by $k$ when $\lambda = 2$ and $G$ is connected.
Guo and Uhlmann themselves showed that the problem and its vertex-version admit polynomial kernelization, establishing that the vertex-version is also FPT by $k$, under the same assumptions that $\lambda = 2$ and $G$ is connected.
Marx and V{\'{e}}gh~\cite{DBLP:journals/talg/MarxV15} extended these results in various ways: They showed that edge connectivity augmentation is FPT parameterized by $k$ and has a polynomial kernel whenever $G$ is $(\lambda-1)$-connected, and edge connectivity augmentation is FPT parameterized by $k$ for $\lambda = 2$ even if $G$ is not connected.
Their algorithms work also for the weighted versions of these problems.
Basavaraju, Fomin, Golovach, Misra, Ramanujan, Saurabh~\cite{DBLP:conf/icalp/BasavarajuFGMRS14} gave a faster FPT algorithm for the case when $G$ is $(\lambda-1)$-connected.

Later, Nutov~\cite{DBLP:conf/esa/Nutov24} showed that vertex connectivity augmentation is FPT parameterized by $k$ when $\lambda = 3$ and $G$ is $2$-vertex-connected.
Carmesin and Ramanujan~\cite{aug4vc} improved this result by showing that vertex connectivity augmentation is FPT parameterized by $k$ for $\lambda \le 4$, regardless of the connectivity of the input graph.
By a reduction, results for vertex connectivity augmentation imply also the same results for edge connectivity augmentation. Thus, for both vertex and edge connectivity augmentation, the problem was only known to be FPT for $\lambda\leq 4$.

In this paper, we present a significantly more general algorithm showing that vertex connectivity augmentation is FPT when parameterized by $k$ and $\lambda$.

\begin{restatable}{theorem}{thmmain}
\label{thm:main}
There is an algorithm for the vertex connectivity augmentation problem that runs in time $2^{\OO(k \log (k+\lambda))} |G|^{\OO(1)}$.
\end{restatable}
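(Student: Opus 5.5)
We plan to prove Theorem~\ref{thm:main} by a branching algorithm guided by small separators, with random separation / representative-set style arguments used to cut down the search space. First, some preprocessing. If $G$ has fewer than $\lambda+1$ vertices the instance is trivial. Every added link can raise the connectivity of a fixed separation by at most one. Hence if the minimum vertex cut has size $<\lambda-k$ we reject. So we may assume $G$ is $(\lambda-k)$-connected. The general strategy is to find a ``deficient'' separation $(A,S,B)$ with $|S|<\lambda$ and branch on which link crosses it. The difficulty is that the number of candidate links crossing a separation is unbounded. To handle this we show that we may restrict to a bounded number of ``representative'' links per separation.

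Step 1 (structure of deficient separations). We call a separation deficient if $|S|<\lambda$. We will use the standard uncrossing of minimum / extreme deficient sets: take an inclusion-minimal side $A$ among deficient separations (an ``extreme set''). A link fixes this separation iff it has one endpoint in $A$ and the other in $B$. Since $A$ is minimal, any solution must contain a link with an endpoint in $A$.

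Step 2 (bounding the choices). The solution has at most $k$ links, hence at most $2k$ endpoints. We would like to argue that the relevant part of $G$ interacting with the solution can be described by $\OO(k)$ vertices chosen among $\mathrm{poly}(k+\lambda)$ candidates. Concretely, we would prove a lemma of the form: given a deficient extreme set $A$, there is a set $R_A\subseteq V(G)$ of size $(k+\lambda)^{\OO(1)}$, computable in polynomial time, such that if a solution exists, then one exists using a link with an endpoint in $R_A$. Moreover, the far endpoint can be restricted similarly. The tool is an exchange argument. Two vertices $u,v\in A$ that are ``equivalent'' with respect to all separations of order $<\lambda$ touching the at most $2k$ solution endpoints can be swapped. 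Using the important-separator bound $4^{\lambda}$ is too expensive, so instead we aim to use the fact that deficient separations of a $(\lambda-k)$-connected graph crossing a fixed region have a tree-like (laminar after uncrossing) structure with $\OO(k)$ relevant levels. This yields only $(k+\lambda)^{\OO(1)}$ distinct behaviours.

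Step 3 (branching). Branch over the $(k+\lambda)^{\OO(1)}$ candidate links given by Step 2, add the chosen link, decrease $k$, and recurse. The depth is $k$, so the running time is $((k+\lambda)^{\OO(1)})^k n^{\OO(1)}=2^{\OO(k\log(k+\lambda))}n^{\OO(1)}$, as claimed. Correctness follows from the exchange lemma: some optimal solution contains one of the branched links.

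The main obstacle is Step 2. We must show that replacing a link endpoint by an equivalent vertex does not destroy $\lambda$-connectivity elsewhere, and that the number of equivalence classes is polynomial in $k+\lambda$ rather than exponential in $\lambda$. Our first attempt would be to work with the $\lambda$-vertex-connectivity version of a ``cactus/block'' decomposition (e.g., tri-/Tutte-like decompositions generalized via extreme sets). We would argue that, because only $k$ links are added, the deficient separations cross the solution in a bounded pattern. We would then use a matroid/representative-family argument (gammoid representation with rank $\lambda$) to pick representative endpoints. This gives a family of size $\binom{\OO(k+\lambda)}{\OO(k)}$, which fits the claimed bound.
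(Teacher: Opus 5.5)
\paragraph{Where the proof breaks: Step~2 is never proved.}
Your outer framework is the same as the paper's: compute a set of $(k+\lambda)^{\OO(1)}$ candidate links such that some solution, if one exists, uses one of them, then branch to depth $k$. But the lemma you state in Step~2 is the entire content of the theorem, and you only say you ``would prove a lemma of the form'' it without proving it. The tools you list do not supply it:
\begin{itemize}
\item Equivalence of vertices ``with respect to all separations of order $<\lambda$ touching the at most $2k$ solution endpoints'' depends on the unknown solution, and you give no bound on the number of classes.
\item The claim that deficient separations become laminar with $\OO(k)$ relevant levels after uncrossing is unsupported. Uncrossing yields laminar families only of carefully chosen extremal separations; it does not describe all deficient ones.
\item The gammoid/representative-family idea is never tied to the exchange property you need. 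It is also inconsistent with Step~3: branching over a family of size $\binom{\OO(k+\lambda)}{\OO(k)}$ at each of $k$ levels only gives $2^{\OO(k^2\log(k+\lambda))}$.
\end{itemize}
Step~1 does not identify a usable leaf region either. That every solution crosses the chosen separation holds for any deficient separation, so minimality buys nothing there. Moreover, an inclusion-minimal deficient side need not be internally well connected. For $\lambda=3$, take edges $up,ur,us,vq,vr,vs,rs,rp,rq,sp,sq,pw_1,qw_2$. Then $X=\{u,v,r,s\}$ has $N(X)=\{p,q\}$, and every nonempty proper subset of $X$ has at least $3$ outside neighbours. Yet $\{r,s\}$ separates $u$ from $v$.

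\paragraph{What is missing.}
The paper uses three ideas in place of vertex equivalence classes.
\begin{itemize}
\item[(i)] A solution-independent exchange criterion. If $S$ is a solution containing $e=xy$, then in $G''=(G\cup S)\setminus\{e\}$ every proper separation of order $<\lambda$ is an $(x,y)$-separation of order exactly $\lambda-1$. By \Cref{lem:leftrightsep}, any link with one end in $C_1\setminus D_1$ and the other in $D_2\setminus C_2$ repairs all of them at once. Since $(C_2,D_2)$ is also a separation of $G$, a condition checkable in $G$ alone (``$e$ is dominated'') guarantees that such a replacement exists among the other candidates.
\item[(ii)] A certificate when no candidate is dominated. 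Restrict the witnessing separations to an independent set of a sparse conflict graph (\Cref{lem:avgdegindset}) and uncross them by submodularity (\Cref{lem:disjointseps}). This gives more than $2k$ deficient separations with pairwise disjoint far sides, so the instance is a no-instance.
\item[(iii)] A separation whose near side has interior pairwise $\lambda$-vertex-connected in $G$. This is a leaf bag of a $\lambda$-unbreakable tree decomposition, found in $2^k|G|^{\OO(1)}$ time by \Cref{lem:specialsepalg}. It is what forces, in the matching case, all but at most $\lambda-1$ near endpoints into $C_1\setminus D_1$.
\end{itemize}
Apply (i)--(ii) to stars of $20\lambda k$ links (\Cref{lem:degirrelev}) and to matchings of $40\lambda k$ links (\Cref{lem:matchirrelev}) among the links crossing that separation. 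In the matching case, testing domination itself uses the $(|T|+1)^k$ recursion of \Cref{lem:matchirrelevsubroutine}. This leaves $\OO(\lambda^2k^2)$ relevant links, which is what makes the branching run in $2^{\OO(k\log(k+\lambda))}|G|^{\OO(1)}$ time. Your proposal contains neither these arguments nor a substitute for them.
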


Note that when $\lambda$ is small enough as a function of $n$, in particular, when $\lambda \le n^{o(1)}$, then the algorithm of \Cref{thm:main} is in fact FPT parameterized by $k$ only.\footnote{For example, if $\lambda \le n^{1/\log \log n}$, the algorithm would run in time $n^{\OO(k \log k/\log \log n)} \le 2^{2^{\OO(k \log k)}} + n^{\OO(1)}$.}

While \Cref{thm:main} directly implies the same result for edge connectivity augmentation, in the case of edge connectivity we can improve the algorithm and remove the dependency on $\lambda$ from it entirely, showing that it is FPT parameterized by $k$ only.

\begin{restatable}{theorem}{thmedgeconn}
\label{thm:edgeconn}
There is an algorithm for the edge connectivity augmentation problem that runs in time $2^{\OO(k \log k)} |G|^{\OO(1)}$.
\end{restatable}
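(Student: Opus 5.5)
The plan is to reduce to the case $\lambda \le \OO(k)$, or more precisely to remove the $\log \lambda$ dependence, by exploiting that edge cuts form a laminar/cactus-like structure. Then we either apply \Cref{thm:main} directly, or rerun its branching with $\lambda$ replaced by a quantity bounded in $k$.

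\textbf{Step 1: a deficiency bound.} Let $\mu$ be the edge connectivity of $G$. Each link increases the size of any cut by at most one. So if $\mu < \lambda - k$, we reject immediately, and otherwise every cut we must fix has size in $[\lambda-k, \lambda-1]$. In particular, the deficient cuts are all near-minimum cuts: they have size at most $\mu + k$.

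\textbf{Step 2: raising $\mu$ cheaply.} I would proceed iteratively, raising the connectivity one level at a time, from $\mu$ to $\mu+1$ and so on up to $\lambda$. This gives at most $k$ levels. At the level where the current connectivity is $c$, the cuts of size exactly $c$ are represented by a cactus (Dinitz--Karzanov--Lomonosov). However, a solution must handle all levels simultaneously, so instead I would work with all cuts of size $<\lambda$ at once. Since these have size at most $\mu+k$, they are $(1+k/\mu)$-approximate minimum cuts. When $\mu$ is large relative to $k$, their number is polynomial and they admit a structure of bounded complexity in $k$. When $\mu \le 2k$, say, we have $\lambda \le 3k$ and \Cref{thm:main} already gives $2^{\OO(k\log k)}|G|^{\OO(1)}$, via the standard reduction from edge to vertex connectivity, which subdivides edges and uses complete bipartite gadgets so that $\lambda$ is preserved. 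So assume $\mu > 2k$.

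\textbf{Step 3: the main case $\mu > 2k$.} Here the family $\mathcal{C}$ of deficient cuts has size $n^{\OO(1)}$ (Karger's bound for $\alpha$-approximate min cuts with $\alpha<3/2$), and it can be enumerated in polynomial time. The problem becomes a hitting-type problem: choose at most $k$ links such that each cut $C\in\mathcal{C}$ is crossed by at least $\lambda-|C|\le k$ chosen links. I would then mimic the branching of \Cref{thm:main}. Pick a deficient cut that is minimal in an appropriate sense, for instance an inclusion-minimal side. Using near-minimum cuts with $\mu > 2k$, show that every solution must use one of only $k^{\OO(1)}$ ``essentially different'' links crossing it. Links with the same behaviour on all relevant cuts are interchangeable, by an important-cut-style domination argument. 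Branching on these gives $(k^{\OO(1)})^k = 2^{\OO(k\log k)}$ leaves.

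\textbf{Main obstacle.} The key difficulty is the domination/representative-set lemma in Step 3, which says that for a minimal deficient cut only $k^{\OO(1)}$ link classes matter, independent of $\lambda$. I would first try to prove it using the crossing structure of near-minimum cuts. Two crossing deficient cuts of size at most $\mu+k$ yield corners of size at most $\mu+2k$ by submodularity. This should confine the deficient cuts containing a fixed minimal side to a chain-like family of length $\OO(k)$. Links can then be classified by the position in this chain where their endpoint leaves, which bounds the branching factor by $\mathrm{poly}(k)$ and removes all dependence on $\lambda$.
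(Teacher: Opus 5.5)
\textbf{There is a genuine gap in Step 3.} Steps 1--2 are sound: the bound $\lambda\le\mu+k$ holds, and for $\mu\le 2k$ an edge-to-vertex reduction into \Cref{thm:main} with $\lambda\le 3k$ works. Karger's bound also makes the deficient family polynomial when $\mu>2k$.

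But Step 3 is the entire content of the theorem, and you only assert it. The route you sketch also rests on a false structural claim: the deficient cuts containing a fixed minimal side need not form a chain of length $\OO(k)$. Take large cliques $X_1,\dots,X_m$ in a path, with $\mu$ matching edges between consecutive cliques, $\lambda=\mu+1$ and $k=1$. This is a yes-instance (link $X_1$ to $X_m$). The deficient cuts containing $X_1$ form a chain of length $m-1$. The links from $X_1$ to $X_2,\dots,X_m$ all cross the minimal cut and leave at pairwise different ``positions'', so your classification gives unboundedly many classes.

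What is needed is domination, not equivalence. Since domination is in general only a partial order, you also need an argument that many mutually undominated links force a no-instance. Nothing in the proposal supplies this.

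An inclusion-minimal deficient side is also the wrong anchor. When deficient cuts have different orders, such a side $A$ can contain two vertices separated by another deficient cut, with both corners inside $A$ of order $\ge\lambda$. Links leaving $A$ then do not start ``from the same place'' and cannot be compared.

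\textbf{The missing idea (the paper's proof).} It needs neither the split on $\mu$, nor Karger's bound, nor \Cref{thm:main}.
\begin{enumerate}
\item Take $(A,B)$ from a leaf of a $\lambda$-Gomory--Hu tree. Then the vertices of $A$ are pairwise $\lambda$-edge-connected, and this persists in every supergraph of $G$.
\item Take $2k+1$ links $a_ib_i$ crossing $(A,B)$. Call $a_ib_i$ \emph{dominated} if no $(a_i,b_i)$-cut of order $<\lambda$ has all other $b_j$ on the $a_i$-side.
\item Suppose $a_ib_i$ is dominated and $S\ni a_ib_i$ is a solution, and let $G''=(G\cup S)\setminus\{a_ib_i\}$. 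Some $b_j$ lies in $D_2$ for the right-most minimum $(a_i,b_i)$-cut $(C_2,D_2)$ of $G''$ (\Cref{lem:leftrightcut}). Also $a_j\in C_1$, by the internal connectivity of $A$. Hence $a_jb_j$ crosses every cut of $G''$ of order $<\lambda$ and can replace $a_ib_i$.
\item Suppose no link is dominated. Choose the witnessing cuts $(C_i,D_i)$ with $|D_i|$ minimal. Uncrossing by submodularity shows the $D_i$ are pairwise disjoint. Each $D_i$ needs a link endpoint, so more than $k$ links are required.
\end{enumerate}
This yields a relevant set of at most $2k$ links, and branching with factor $2k$ and depth $k$ gives $2^{\OO(k\log k)}|G|^{\OO(1)}$.
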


The algorithms of \Cref{thm:main,thm:edgeconn} are the first to show that vertex and edge connectivity augmentation are FPT parameterized by $k$ and $\lambda$ without any assumptions on the connectivity of $G$.

\paragraph{Outline of our algorithms.}
The algorithms of \Cref{thm:main,thm:edgeconn} are based on finding a ``relevant'' subset of links $L' \subseteq L$, so that (1) $|L'|$ is bounded, and (2) if a solution exists, then a solution that contains a link from $L'$ exists.
After finding such $L'$, it is easy to obtain an FPT algorithm by branching.
Of course, this is a classic approach for designing FPT algorithms, which has also been used in the context of connectivity augmentation before~\cite{DBLP:journals/dam/Nagamochi03,DBLP:journals/networks/GuoU10,aug4vc}, so the novelty of our work is in finding the relevant subset $L'$.

We find $L'$ by repeatedly removing ``irrelevant'' links from $L$.
The first case, in both the vertex and edge versions, is when there is a vertex $v$ with a large number of links incident to it.
We show that then, either (1) one of the links is dominated by the others, in that it can be always replaced by a better link, or (2) the endpoints of the links can be separated too well from each other, showing that there is in fact no solution with $\le k$ links.
The other case is when there are no high-degree vertices with respect to the links, but still many links.
In this case, we identify a ``leaf-like'' region, which consists of vertices that are well-connected to each other but to which we must attach links 
to better connect it to a rest of the graph.
With a bit more complicated arguments, the leaf-like region can then play the role of the single vertex $v$ in the previous argument, and we can identify irrelevant
links incident to it until there is only bounded number of them.

Our algorithms for vertex connectivity augmentation and edge connectivity augmentation are very similar to each other.
The main difference is, that for edge connectivity augmentation we can get down to $\OO(k)$ relevant links, while for vertex connectivity augmentation we can only get down to $\OO(k \lambda)$ relevant links.
The reason for this is that for vertex connectivity augmentation, the vertices in the separators play a special role, incurring a factor of $\OO(\lambda)$ from the separator size.


\paragraph{Comparison to~\cite{aug4vc}.}
On a high-level, our approach is similar to that of Carmesin and Ramanujan~\cite{aug4vc} and inspired by it.
The main difference is that in the arguments for arguing irrelevance of links and detecting the ``leaf-like'' region, Carmesin and Ramanujan use a special graph decomposition that describes the structure of separators of size $< 4$ very precisely.
This structure is not known to extend to larger sizes of separators.
We instead use more approximate arguments, that do not require a precise understanding of all separators of size $< \lambda$, but instead employ Ramsey/pigeonhole-style reasoning of finding a well-structured subset in a large unstructured set.
This argumentation turns out to be simpler and lead to more general results.

\paragraph{Organization of the paper.}
We start by presenting definitions and preliminary results in \Cref{sec:prelims}.
Then, in \Cref{sec:vconnaug} we present our algorithm for vertex connectivity augmentation, i.e., prove \Cref{thm:main}.
In \Cref{sec:edgeconnaug} we present our algorithm for edge connectivity augmentation, i.e., prove \Cref{thm:edgeconn}.
We conclude with open problems in \Cref{sec:conclusions}.

\section{Preliminaries}
\label{sec:prelims}
For a positive integer $n$, we denote the set of integers $1, \ldots, n$ by $[n]$.
For a set $X$, we denote by $\binom{X}{2}$ the collection of all unordered pairs of elements of $X$.
In particular, $|\binom{X}{2}| = \binom{|X|}{2}$.

In this paper, we consider undirected simple graphs.
The vertex set of a graph $G$ is denoted by $V(G)$ and the edge set by $E(G) \subseteq \binom{V(G)}{2}$.
The open neighborhood of a vertex $v \in V(G)$ is denoted by $N(v) = \{u \mid vu \in E(G)\}$, and the closed neighborhood is $N[v] = N(v) \cup \{v\}$.
For a graph $G$ and a set $X \subseteq \binom{V(G)}{2}$, we denote by $G \cup X$ the graph with the vertex set $V(G)$ and edge set $E(G) \cup X$.
Analogously, $G \setminus X$ denotes the graph with the vertex set $V(G)$ and edge set $E(G) \setminus X$.
An independent set of $G$ is a set of vertices $I \subseteq V(G)$ so that there are no edges between vertices in $I$.
A matching is a set of edges $M \subseteq E(G)$ that do not share a vertex.

\subsection{Vertex separations}
A \emph{separation} of a graph $G$ is a pair of vertex sets $(A,B)$, so that $A \cup B = V(G)$ and there are no edges between $A \setminus B$ and $B \setminus A$.
The \emph{separator} of a separation $(A,B)$ is the set $A \cap B$, and the \emph{order} of it is $|A \cap B|$.
A separation is \emph{proper} if both $A \setminus B$ and $B \setminus A$ are non-empty.
For two distinct vertices $a, b \in V(G)$, an \emph{$(a,b)$-separation} is a separation $(A,B)$ with $a \in A \setminus B$ and $b \in B \setminus A$.
Note that for an $(a,b)$-separation $(A,B)$, it must hold that $N[a] \subseteq A$ and $N[b] \subseteq B$.

Two distinct vertices $a,b \in V(G)$ are $\lambda$-vertex-connected if there is no $(a,b)$-separation of order $<\lambda$.
In particular, if $ab \in E(G)$, then $a$ and $b$ are $\lambda$-vertex-connected for all $\lambda$.
A graph $G$ is $\lambda$-vertex-connected if all pairs of its vertices are $\lambda$-vertex-connected, or equivalently, if there is no proper separation of order $<\lambda$.
The vertex connectivity of $G$ is the maximum $\lambda$ so that $G$ is $\lambda$-vertex-connected, or $\infty$ if $G$ is $\lambda$-vertex-connected for all $\lambda$, in particular, if $G$ is a clique.

We note that often in the literature, $\lambda$-vertex-connectivity is defined so that it also requires $|V(G)| \ge k+1$.
We use our definition as it is more natural for our proofs, but note that for \Cref{thm:main,thm:edgeconn} it makes no difference which definition we choose.

We need the following observation on vertex connectivity.


\begin{observation}
\label{obs:vertconn}
If $G$ is a graph and $e \in \binom{V(G)}{2}$, then the vertex connectivity of $G \cup \{e\}$ is at most one more than the vertex connectivity of $G$, unless $G \cup \{e\}$ is a clique.
\end{observation}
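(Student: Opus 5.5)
Suppose $G$ has vertex connectivity $c$ and $G \cup \{e\}$ is not a clique. The plan is to exhibit a proper separation of $G \cup \{e\}$ of order at most $c+1$; then the vertex connectivity of $G \cup \{e\}$ is at most $c+1$.

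If $e \in E(G)$ there is nothing to prove, so assume $e = uv \notin E(G)$. We first dispose of the case that $G$ is a clique: then $e$ would already be an edge, so this case cannot occur. Hence $c$ is finite, and there is a proper separation $(A,B)$ of $G$ of order $c$, chosen to be of minimum order.

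\textbf{Case 1.} The link $e$ does not cross $(A,B)$, that is, it is not the case that one endpoint lies in $A \setminus B$ and the other in $B \setminus A$. Then $(A,B)$ is still a separation of $G \cup \{e\}$. It is still proper, and it has order $c \le c+1$, so we are done.

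\textbf{Case 2.} Say $u \in A \setminus B$ and $v \in B \setminus A$. We try to move one endpoint into the separator. Set $A' = A$ and $B' = B \cup \{u\}$. Now $e$ has the endpoint $u$ in $A' \cap B'$, and the only new adjacencies to check are those of $u$. Since $N_G(u) \subseteq A$, we get $N_{G\cup\{e\}}(u) \subseteq A \cup \{v\} \subseteq A' \cup B'$. Moreover, no edge joins $A' \setminus B' = A \setminus (B\cup\{u\})$ to $B' \setminus A' = B \setminus A$, because the only new edge is $uv$ and $u$ now lies in the separator. So $(A',B')$ is a separation of $G \cup \{e\}$ of order $c+1$.

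It is proper provided $A \setminus B \ne \{u\}$. Symmetrically, moving $v$ instead gives a proper separation of order $c+1$ provided $B \setminus A \ne \{v\}$. The one remaining subcase, which is the main obstacle, is $A \setminus B = \{u\}$ and $B \setminus A = \{v\}$, so that $V(G) = S \cup \{u,v\}$ with $S = A \cap B$ and $|S| = c$.

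In this subcase $G \cup \{e\}$ is not a clique, so it has a non-adjacent pair $x,y$, and $\{x,y\} \ne \{u,v\}$. We take the separation $(V \setminus \{y\}, V \setminus \{x\})$. Its separator $V \setminus \{x,y\}$ has size $|V|-2 = c$, and it is proper because $xy$ is a non-edge. Hence the connectivity of $G \cup \{e\}$ is at most $c \le c+1$.

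In all cases the vertex connectivity of $G \cup \{e\}$ is at most $c+1$, which proves the observation.
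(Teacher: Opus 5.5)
Your proof is correct: the three cases (the edge does not cross a minimum proper separation; you move a crossing endpoint into the separator; and the residual case $|V(G)| = c+2$, where a non-adjacent pair of $G \cup \{e\}$ gives a proper separation of order $c$) cover everything. The paper states this observation without proof, remarking only later that adding an edge raises the connectivity by at most one when it is below $|V(G)|-2$, and your argument is the standard justification behind that remark.
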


The following is a well-known observation on separations of graphs.

\begin{observation}
\label{obs:genseps}
If $(A,B)$ and $(C,D)$ are separations of $G$, then also $(A \cap C, B \cup D)$ is a separation of $G$.
\end{observation}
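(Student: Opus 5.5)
We verify the two defining conditions of a separation for $(A \cap C, B \cup D)$ directly from the corresponding properties of $(A,B)$ and $(C,D)$.

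\emph{Covering.} We check that $(A \cap C) \cup (B \cup D) = V(G)$. Take any vertex $v \in V(G)$ and suppose $v \notin B \cup D$. Since $A \cup B = V(G)$ and $v \notin B$, we get $v \in A$. Likewise, $C \cup D = V(G)$ and $v \notin D$ give $v \in C$. Hence $v \in A \cap C$.

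\emph{No crossing edges.} We show there is no edge between $(A \cap C) \setminus (B \cup D)$ and $(B \cup D) \setminus (A \cap C)$. Let $u \in (A \cap C) \setminus (B \cup D)$. Then $u \in A \setminus B$ and $u \in C \setminus D$. Let $w \in (B \cup D) \setminus (A \cap C)$. Then $w \notin A$ or $w \notin C$, and we split into these two cases.
\begin{itemize}
\item If $w \notin A$, then $w \in B$ because $A \cup B = V(G)$, so $w \in B \setminus A$. Since $u \in A \setminus B$ and $(A,B)$ is a separation, $uw \notin E(G)$.
\item If $w \notin C$, then $w \in D$ because $C \cup D = V(G)$, so $w \in D \setminus C$. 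Since $u \in C \setminus D$ and $(C,D)$ is a separation, $uw \notin E(G)$.
\end{itemize}
In either case $uw$ is not an edge, so $(A \cap C, B \cup D)$ is a separation of $G$.

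The only point needing care is in the second condition: a vertex $w$ of $B \cup D$ lying outside $A \cap C$ need not lie outside $A$ or outside $C$ in the "right" part. We must use the covering property of the relevant separation to place $w$ in $B \setminus A$ or in $D \setminus C$, so that the no-edge property of that separation applies.
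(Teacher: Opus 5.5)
Your proof is correct and follows the paper's argument: your case split on $w \notin A$ versus $w \notin C$ is exactly the paper's identity $(B \cup D) \setminus (A \cap C) = (B \setminus A) \cup (D \setminus C)$, paired with $(A \cap C) \setminus (B \cup D) = (A \setminus B) \cap (C \setminus D)$. You also verify the covering condition $(A \cap C) \cup (B \cup D) = V(G)$, which the paper leaves implicit.
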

\begin{proof}
Let $X = (A \cap C) \setminus (B \cup D) = (A \setminus B) \cap (C \setminus D)$ and $Y = (B \cup D) \setminus (A \cap C) = (B \setminus A) \cup (D \setminus C)$.
It follows that if there is an edge between $X$ and $Y$, then it is either between $B \setminus A$ and $A \setminus B$, or between $D \setminus C$ and $C \setminus D$.
\end{proof}

Note that \Cref{obs:genseps} can be used to generate up to four different separations by permuting the orders of the pairs $(A,B)$ and $(C,D)$.

The next lemma is the well-known property of \emph{submodularity} of separations.

\begin{lemma}[submodularity of separations]
\label{lem:vertsubmod}
If $(A,B)$ and $(C,D)$ are separations of $G$, then 
\[|(A \cap C) \cap (B \cup D)| + |(A \cup C) \cap (B \cap D)| \le |A \cap B| + |C \cap D|.\]
\end{lemma}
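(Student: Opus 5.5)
The plan is a direct counting argument: write every set involved as a union of disjoint ``cells'' and compare how often each cell is counted on the two sides. Each vertex $v$ gets a membership pattern in $(A\setminus B, A\cap B, B\setminus A)$ with respect to $(A,B)$ and in $(C\setminus D, C\cap D, D\setminus C)$ with respect to $(C,D)$. This gives a $3\times 3$ grid of cells. The right-hand side counts the middle row once plus the middle column once, so the center cell $(A\cap B)\cap(C\cap D)$ is counted twice.

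Next I compute the two separators on the left. By \Cref{obs:genseps}, $(A\cap C, B\cup D)$ and $(A\cup C, B\cap D)$ are separations; the second follows by applying the observation with both pairs swapped. Their separators are:
\begin{itemize}
\item $(A\cap C)\cap(B\cup D)$, which consists of the center cell together with $(A\cap B)\cap(C\setminus D)$ and $(A\setminus B)\cap(C\cap D)$;
\item $(A\cup C)\cap(B\cap D)$, which consists of the center cell together with $(A\cap B)\cap(D\setminus C)$ and $(B\setminus A)\cap(C\cap D)$.
\end{itemize}
For the first, a vertex lies in $A\cap C$ and in $B\cup D$ exactly when it is in the center, or in $A\cap B$ and in $C$ but not $D$, or in $C\cap D$ and in $A$ but not $B$. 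The second is the symmetric computation.

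Summing the two separators, the center cell is counted twice. Each of the four non-corner, non-center cells of the middle row and middle column is counted once. The corner cells are not counted at all. The right-hand side counts exactly the same cells with the same multiplicities. So the inequality in \Cref{lem:vertsubmod} in fact holds with equality; the stated inequality follows, and indeed does not even use the separation property.

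The only step needing care is getting the cell bookkeeping for the two left-hand separators right. I would verify it by a membership check for each of the nine patterns.
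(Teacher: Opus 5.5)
Your proof is correct, and it is a finer argument than the paper's. The paper never partitions $V(G)$ into cells. Writing $S_1,S_2$ for the two left-hand separators, it only observes that $S_1\cup S_2\subseteq (A\cap B)\cup(C\cap D)$ and $S_1\cap S_2\subseteq A\cap B\cap C\cap D$, and then concludes by inclusion--exclusion. Your $3\times 3$ bookkeeping is right cell by cell, and it gives more: the two sides are in fact equal.

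The trade-off is in the hypotheses. The paper's argument works for four completely arbitrary vertex sets. Your grid is a partition of $V(G)$ only because $A\cup B=C\cup D=V(G)$. So your closing remark needs refining. The no-edges condition is indeed never used, and neither is \Cref{obs:genseps}, which you can drop. The covering condition, however, is used. Without it equality can fail: $V(G)=\{v\}$, $A=B=\emptyset$, $C=D=\{v\}$ gives $0$ on the left and $1$ on the right. Only the inequality holds with no hypothesis at all, as the paper's proof shows.
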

\begin{proof}
Let $S_1 = (A \cap C) \cap (B \cup D)$ and $S_2 = (A \cup C) \cap (B \cap D)$, and observe that $S_1 \cup S_2 \subseteq (A \cap B) \cup (C \cap D)$.
If a vertex is in both $S_1$ and $S_2$, then it is in $(A \cap C) \cap (B \cup D) \cap (A \cup C) \cap (B \cap D) = A \cap B \cap C \cap D$, so it is in both $A \cap B$ and $C \cap D$.
This implies the conclusion.
\end{proof}

The next lemma states that there are unique ``left-most'' and ``right-most'' minimum $(a,b)$-separations.

\begin{lemma}
\label{lem:leftrightsep}
Let $G$ be a graph, $a,b \in V(G)$ distinct non-adjacent vertices, and $\lambda$ the minimum order of an $(a,b)$-separation of $G$.
There exists $(a,b)$-separations $(A_1, B_1)$ and $(A_2, B_2)$ of order $\lambda$ (possibly the same), so that for all $(a,b)$-separations $(A,B)$ of order $\lambda$ it holds that $A_1 \subseteq A \subseteq A_2$ and $B_2 \subseteq B \subseteq B_1$.
Moreover, such separations can be found in polynomial time.
\end{lemma}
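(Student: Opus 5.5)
The plan is to use submodularity (\Cref{lem:vertsubmod}) together with \Cref{obs:genseps} to show that the family of minimum $(a,b)$-separations is closed under the ``meet'' and ``join'' operations. Taking the meet (resp.\ join) of all of them then yields $(A_1,B_1)$ (resp.\ $(A_2,B_2)$). Polynomial-time computability will follow from max-flow/Menger arguments.

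\textbf{Step 1: closure under meet and join.} Let $(A,B)$ and $(C,D)$ be $(a,b)$-separations of order $\lambda$. By \Cref{obs:genseps}, $(A\cap C, B\cup D)$ is a separation. Applying the observation to the swapped pairs $(B,A)$ and $(D,C)$ shows that $(A\cup C, B\cap D)$ is also a separation. Both are $(a,b)$-separations: $a \in (A\setminus B)\cap(C\setminus D)$ lies in $A\cap C$ but not in $B\cup D$, and symmetrically for $b$. Hence each has order at least $\lambda$. By \Cref{lem:vertsubmod}, the sum of their orders is at most $2\lambda$, so both have order exactly $\lambda$.

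\textbf{Step 2: extremal elements.} There are only finitely many $(a,b)$-separations of order $\lambda$, and the family is nonempty because $\lambda$ is attained; note that $\lambda$ is finite since $a,b$ are non-adjacent, so e.g.\ $(V\setminus\{b\}, V\setminus\{a\})$ is an $(a,b)$-separation. Iterating Step 1, define $(A_1,B_1)$ as the meet of all of them, with $A_1$ the intersection of all the $A$'s and $B_1$ the union of all the $B$'s. This is again an $(a,b)$-separation of order $\lambda$ and satisfies $A_1\subseteq A$ and $B\subseteq B_1$ for every member. Symmetrically, the join gives $(A_2,B_2)$ with $A\subseteq A_2$ and $B_2\subseteq B$.

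\textbf{Step 3: algorithm.} Build the standard vertex-split auxiliary digraph: replace each vertex $v\ne a,b$ by an arc $v_{\mathrm{in}}\to v_{\mathrm{out}}$ of capacity $1$, and replace each edge by a pair of infinite-capacity arcs. Compute a maximum $a$--$b$ flow, of value $\lambda$ by Menger. Let $R$ be the set of vertices reachable from $a$ in the residual graph. The vertices $v$ whose in-copy lies in $R$ but whose out-copy does not form a minimum separator $S$. Set $A_1$ to be $S$ together with all vertices $v$ with $v_{\mathrm{out}}\in R$ (plus $a$), and $B_1=(V\setminus A_1)\cup S$.

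Minimality of $A_1$ follows from the standard fact that every minimum cut contains the residual-reachable set on its source side. Concretely, any order-$\lambda$ $(a,b)$-separation $(A,B)$ corresponds to a minimum cut whose source side contains $v_{\mathrm{in}},v_{\mathrm{out}}$ for $v\in A\setminus B$ and $v_{\mathrm{in}}$ for $v\in A\cap B$. Since that source side contains $R$, we get $A_1\subseteq A$. Symmetrically, using vertices that can reach $b$ in the residual graph gives $(A_2,B_2)$.

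Alternatively, one can avoid flow bookkeeping by greedy shrinking. Repeatedly test, for each $v\in A\setminus\{a\}$, whether an $(a,b)$-separation of order $\lambda$ exists that excludes $v$ from $A$, i.e.\ with $v\in B\setminus A$. This is a min-cut computation after identifying $v$ with $b$. By Step 1, the meet with such a separation strictly shrinks $A$, so polynomially many rounds reach $A_1$.

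The main obstacle I expect is this correctness bookkeeping in Step 3, translating cuts in the auxiliary digraph back to separations exactly. The greedy alternative sidesteps it using only Step 1, so I would present that version.
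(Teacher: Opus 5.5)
Your proposal is correct and follows essentially the same route as the paper: both uncross two minimum $(a,b)$-separations via \Cref{lem:vertsubmod} to show that the meet and join again have order $\lambda$. The paper picks the separation minimizing $|A_1|$ (which then also maximizes $|B_1|$), while you take the iterated meet of all of them, which is the same object; for computability the paper just cites the standard Ford--Fulkerson residual-reachability argument that you spell out, and your greedy-shrinking alternative is also sound (at termination $B$ equals $B_1$, since $A = A_1$ and $|B| = |V(G)| - |A| + \lambda$ force it).
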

\begin{proof}
Let $(A_1, B_1)$ be an arbitrary $(a,b)$-separation of order $\lambda$ that minimizes $|A_1|$.
Note that $(A_1, B_1)$ not only minimizes $|A_1|$, but also maximizes $|B_1|$ among such separations, as we know that $|B_1| = |V(G)| - |A_1| + \lambda$.
Consider also an arbitrary $(a,b)$-separation $(A,B)$ of order $\lambda$.

\begin{claim}
$A_1 \subseteq A$ and $B \subseteq B_1$.
\end{claim}
\begin{claimproof}
Consider the separations $(A_\ell, B_\ell) = (A \cap A_1, B \cup B_1)$ and $(A_r, B_r) = (A \cup A_1, B \cap B_1)$.
Both $(A_\ell, B_\ell)$ and $(A_r, B_r)$ are $(a,b)$-separations.
By submodularity we have that
\[|A_\ell \cap B_\ell| + |A_r \cap B_r| \le |A \cap B| + |A_1 \cap B_1|,\]
which by the fact that $(A,B)$ and $(A_1, B_1)$ are minimum order $(a,b)$-separations implies that $|A_\ell \cap B_\ell| = \lambda$ and $|A_r \cap B_r| = \lambda$.

Now, if $A_1 \not\subseteq A$, then $|A_\ell| < |A_1|$, contradicting the choice of $(A_1, B_1)$.
Similarly, if $B \not\subseteq B_1$, then $|B_\ell| > |B_1|$, also contradicting the choice of $(A_1, B_1)$.
\end{claimproof}

This proves the existence of such $(A_1, B_1)$.
By swapping the order of $a$ and $b$, we obtain the existence of $(A_2, B_2)$ with the same argument.

We recall that the separations $(A_1, B_1)$ and $(A_2, B_2)$ can be found in polynomial-time via the standard Ford-Fulkerson algorithm, in particular, the standard application of it for finding a minimum $(a,b)$-separation finds the separation $(A_1,B_1)$, and by swapping $a$ and $b$ we find $(A_2, B_2)$.
\end{proof}

\subsection{Edge cuts}
A \emph{cut} of a graph $G$ is a pair $(A,B)$ of disjoint sets with $A \cup B = V(G)$.
The edges $uv$ of $G$ with $u \in A$ and $v \in B$ are denoted by $E(A,B)$, and the \emph{order} of $(A,B)$ is $|E(A,B)|$.
A cut is \emph{proper} if both $A$ and $B$ are non-empty.
For vertices $a,b \in V(G)$, a cut $(A,B)$ is an $(a,b)$-cut if $a \in A$ and $b \in B$.
A set $X \subseteq E(G)$ is an \emph{$(a,b)$-cutset} if $a$ and $b$ are in different connected components of $G \setminus X$, in particular, if there exists an $(a,b)$-cut $(A,B)$ with $E(A,B) \subseteq X$.

Note that if $(A,B)$ and $(C,D)$ are cuts, then so are $(A \cap C, B \cup D)$ and $(A \cup C, B \cap D)$.
It is well-known that cuts are submodular.

\begin{lemma}[submodularity of cuts]
\label{lem:edgesubmod}
For two cuts $(A,B)$ and $(C,D)$ of $G$, it holds that
\[|E(A \cap C, B \cup D)| + |E(A \cup C, B \cap D)| \le |E(A,B)|+|E(C,D)|.\]
\end{lemma}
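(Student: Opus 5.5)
The plan is a double-counting argument: I will classify every edge by which of the four ``quadrants'' its endpoints lie in, and compare how often each class is counted on the two sides of the inequality. Set $X = A \cap C$, $Y = B \cup D$, $X' = A \cup C$, $Y' = B \cap D$. First I check that $(X,Y)$ and $(X',Y')$ are cuts. Since $(A,B)$ is a cut, $B = V(G)\setminus A$, and likewise $D = V(G) \setminus C$. Hence $Y = V(G) \setminus (A\cap C)$ and $Y' = V(G)\setminus(A \cup C)$, which is exactly the cut property.

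Partition $V(G)$ into the four sets $Q_{AC} = A\cap C$, $Q_{AD} = A \cap D$, $Q_{BC} = B\cap C$ and $Q_{BD} = B \cap D$. An edge $uv$ contributes to a quantity only if its endpoints lie in different quadrants, so I classify such edges by the unordered pair of quadrants they join. There are six classes; I count each class on both sides, with $\mathrm{LHS} = |E(X,Y)| + |E(X',Y')|$ and $\mathrm{RHS} = |E(A,B)| + |E(C,D)|$.

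\begin{itemize}
\item $Q_{AC}$--$Q_{AD}$: crosses $(C,D)$ only, so it contributes $1$ to RHS. It crosses $(X,Y)$ and not $(X',Y')$, so it contributes $1$ to LHS.
\item $Q_{AC}$--$Q_{BC}$: symmetric to the previous case, contributing $1$ to each side.
\item $Q_{AC}$--$Q_{BD}$: crosses both $(A,B)$ and $(C,D)$, so it contributes $2$ to RHS. It crosses both $(X,Y)$ and $(X',Y')$, so it contributes $2$ to LHS.
\item $Q_{AD}$--$Q_{BD}$ and $Q_{BC}$--$Q_{BD}$: each contributes $1$ to RHS and $1$ to LHS, the latter via $(X',Y')$.
\item $Q_{AD}$--$Q_{BC}$: crosses both $(A,B)$ and $(C,D)$, so it contributes $2$ to RHS. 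Both endpoints lie in $X'\setminus X$, so it crosses neither $(X,Y)$ nor $(X',Y')$ and contributes $0$ to LHS.
\end{itemize}

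Summing over all classes gives $\mathrm{RHS} - \mathrm{LHS} = 2\,|E(A\cap D, B\cap C)| \ge 0$, which is the claimed inequality. The only step that needs care is the case table; the set-up is routine, and I expect no real obstacle.
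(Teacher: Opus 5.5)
Your proof is correct and is essentially the paper's per-edge counting argument. The paper only observes that $E(A\cap C, B\cup D)\cup E(A\cup C, B\cap D)\subseteq E(A,B)\cup E(C,D)$, and that an edge in both left-hand sets joins $A\cap C$ to $B\cap D$ and so lies in both right-hand sets; your quadrant table makes this same multiplicity comparison explicit and also gives the exact slack $2|E(A\cap D, B\cap C)|$, which tells you precisely when equality holds.
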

\begin{proof}
First, note that $E(A \cap C, B \cup D) \cup E(A \cup C, B \cap D) \subseteq E(A,B) \cup E(C,D)$.
Then, if an edge is in both $E(A \cap C, B \cup D)$ and $E(A \cup C, B \cap D)$, it must be between $A \cap C$ and $B \cap D$, so it is in both $E(A,B)$ and $E(C,D)$.
This implies the conclusion.
\end{proof}

The next lemma states that there are unique ``left-most'' and ``right-most'' minimum $(a,b)$-cuts.

\begin{lemma}
\label{lem:leftrightcut}
Let $G$ be a graph, $a,b \in V(G)$ distinct vertices, and $\lambda$ the minimum order of an $(a,b)$-cut of $G$.
There exists $(a,b)$-cuts $(A_1, B_1)$ and $(A_2, B_2)$ of order $\lambda$ (possibly the same), so that for all $(a,b)$-cuts $(A,B)$ of order $\lambda$ it holds that $A_1 \subseteq A \subseteq A_2$ and $B_2 \subseteq B \subseteq B_1$.
Moreover, such cuts can be found in polynomial time.
\end{lemma}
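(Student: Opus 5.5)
We mirror the proof of \Cref{lem:leftrightsep}, replacing submodularity of separations by submodularity of cuts (\Cref{lem:edgesubmod}). Let $(A_1,B_1)$ be an $(a,b)$-cut of order $\lambda$ that minimizes $|A_1|$. Since $B_1 = V(G)\setminus A_1$, it also maximizes $|B_1|$ among such cuts. Fix an arbitrary $(a,b)$-cut $(A,B)$ of order $\lambda$.

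We first show that $A_1 \subseteq A$ and $B \subseteq B_1$. Consider $(A_\ell,B_\ell) = (A\cap A_1, B\cup B_1)$ and $(A_r,B_r)=(A\cup A_1, B\cap B_1)$. As noted before \Cref{lem:edgesubmod}, both are cuts. Both are $(a,b)$-cuts, because $a \in A\cap A_1$ and $b\in B\cap B_1$. By \Cref{lem:edgesubmod},
\[|E(A_\ell,B_\ell)| + |E(A_r,B_r)| \le |E(A,B)| + |E(A_1,B_1)| = 2\lambda.\]
Each term on the left is at least $\lambda$ by minimality, so both equal $\lambda$. If $A_1\not\subseteq A$, then $|A_\ell|<|A_1|$, contradicting the choice of $(A_1,B_1)$. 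Since the sides of a cut are complementary, $B\subseteq B_1$ is equivalent to $A_1 \subseteq A$, so it follows as well.

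Swapping the roles of $a$ and $b$ gives $(A_2,B_2)$. Concretely, take the $(b,a)$-cut $(B_2,A_2)$ of order $\lambda$ minimizing $|B_2|$. The same argument shows $B_2\subseteq B$ and hence $A\subseteq A_2$.

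For the algorithmic part, we compute a maximum flow from $a$ to $b$, replacing each undirected edge by two opposite arcs of capacity $1$, with the Ford--Fulkerson algorithm. Let $A_1$ be the set of vertices reachable from $a$ in the final residual graph. The standard argument shows $(A_1, V(G)\setminus A_1)$ is a minimum cut. To see that it is the one minimizing $|A_1|$, note that every minimum cut $(A,B)$ is saturated by the maximum flow: all arcs from $A$ to $B$ are full and all arcs from $B$ to $A$ are empty. Hence no residual arc leaves $A$, so every vertex reachable from $a$ lies in $A$, giving $A_1\subseteq A$. Doing the same from $b$ in the residual graph yields $(A_2,B_2)$.

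The only mildly delicate point is this saturation argument for the residual reachable set. Even without it, we could compute $(A_1,B_1)$ in polynomial time by greedy contraction: a vertex $v$ can be removed from $A$ exactly when the minimum $(a,b)$-cut value stays $\lambda$ after merging $v$ into $b$.
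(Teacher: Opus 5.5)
Your proof is correct and takes the same route as the paper: pick the minimum cut minimizing $|A_1|$, uncross it with an arbitrary minimum cut via submodularity of cuts, and swap $a$ and $b$ to get $(A_2,B_2)$; your residual-graph saturation argument just makes explicit the Ford--Fulkerson step the paper cites as standard. One small precision: for $(A_2,B_2)$, take $B_2$ to be the set of vertices that can \emph{reach} $b$ in the residual graph (equivalently, rerun the procedure with $a$ and $b$ swapped), because the set of vertices reachable \emph{from} $b$ in the same residual graph can contain $a$, e.g.\ on the path $a - x - b$.
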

\begin{proof}
Let $(A_1, B_1)$ be an arbitrary $(a,b)$-cut of order $\lambda$ that minimizes $|A_1|$.
Consider an arbitrary $(a,b)$-cut $(A,B)$ of order $\lambda$.

\begin{claim}
$A_1 \subseteq A$.
\end{claim}
\begin{claimproof}
Suppose that $A_1 \not\subseteq A$ and consider the cuts $(A_\ell, B_\ell) = (A \cap A_1, B \cup B_1)$ and $(A_r, B_r) = (A \cup A_1, B \cap B_1)$.
Both of them are $(a,b)$-cuts.
By submodularity,
\[|E(A_\ell, B_\ell)| + |E(A_r, B_r)| \le |E(A,B)| + |E(A_1, B_1)|,\]
which by the fact that $(A,B)$ and $(A_1, B_1)$ are minimum order $(a,b)$-cuts implies that $|E(A_\ell, B_\ell)| = |E(A_r, B_r)| = \lambda$.

Now, $|A_\ell| < |A_1|$, contradicting the choice of $(A_1, B_1)$.
\end{claimproof}

Note that $A_1 \subseteq A$ implies $B \subseteq B_1$, so this gives the existence of such $(A_1, B_1)$.
By swapping the order of $a$ and $b$ in the above argument, we obtain the existence of $(A_2, B_2)$.

We recall that the cuts $(A_1, B_1)$ and $(A_2, B_2)$ can be found in polynomial-time via the standard Ford-Fulkerson algorithm, in particular, the standard application of it for finding a minimum $(a,b)$-cut finds the cut $(A_1,B_1)$, and by swapping $a$ and $b$ we find $(A_2, B_2)$.
\end{proof}

\subsection{Connectivity augmentation}
An \emph{instance} of the vertex connectivity augmentation problem is a $4$-tuple $\ins = (G, L, \lambda, k)$, where $G$ is a graph, $L \subseteq \binom{V(G)}{2} \setminus E(G)$ is a set of links, and $\lambda$ and $k$ are integers.
A solution to the instance $\ins$ is a subset $S \subseteq L$ of size $|S| \le k$ so that $G \cup S$ is $\lambda$-vertex-connected.
The instance $\ins$ is a \emph{yes-instance} if there exists a solution and \emph{no-instance} otherwise.

We define instances of the edge connectivity augmentation problem similarly, except replacing $\lambda$-vertex-connectedness by $\lambda$-edge-connectedness.

Let $\ins = (G, L, \lambda, k)$ be an instance of the vertex connectivity augmentation problem.
A link $uv \in L$ \emph{crosses} a separation $(A,B)$ of $G$ if $u \in A \setminus B$ and $v \in B \setminus A$.
Note that $S \subseteq L$ is a solution to $\ins$ if and only if for every proper separation $(A,B)$ of $G$ of order $< \lambda$, there is a link in $S$ that crosses $(A,B)$.

Similarly, if $\ins = (G, L, \lambda, k)$ is an instance of the edge connectivity augmentation problem, we define that a link $uv \in L$ \emph{crosses} a cut $(A,B)$ of $G$ if $u \in A$ and $v \in B$.
We observe that $S \subseteq L$ is a solution to $\ins$ if and only if for every proper cut $(A,B)$ of $G$ of order $|E(A,B)| < \lambda$, there are at least $\lambda-|E(A,B)|$ links in $S$ that cross $(A,B)$.


\section{Vertex connectivity augmentation}
\label{sec:vconnaug}
In this section we prove \Cref{thm:main}.
First we present an algorithm for finding a separation with certain properties, identifying a ``leaf-like'' region of the graph. Next we provide lemmas about finding ``irrelevant links'', and finally we put these ingredients together to give a branching algorithm.


\subsection{Separation with a well-connected left side}\label{sec:well-connected-left}
The goal of this subsection is to prove the following lemma for finding a well-connected ``leaf-like'' region in the input graph $G$.

\begin{lemma}
\label{lem:specialsepalg}
Let $G$ be a graph and $\lambda,k \ge 1$ integers so that $G$ is not $\lambda$-vertex-connected, but is $(\lambda-k)$-vertex-connected.
There exists a proper separation $(A,B)$ of $G$ of order $<\lambda$ so that vertices in $A \setminus B$ are pairwise $\lambda$-vertex-connected, and it can be found in $2^k |G|^{\OO(1)}$ time.
\end{lemma}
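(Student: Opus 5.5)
We prove the statement by an iterative shrinking procedure. The quantity that decreases is the order of the current separation, which starts below $\lambda$ and can never drop below $\lambda-k$. This gives at most $k$ rounds. The $2^k$ factor comes from branching: in each round we follow one of two candidate separations obtained by uncrossing.

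\textbf{Invariant.} We maintain a proper separation $(A,B)$ of order $<\lambda$ in which $|A\setminus B|$ is as small as possible among the candidates we currently know.

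\textbf{Initialization.} Take any proper separation of order $<\lambda$. We find one in polynomial time by computing a minimum $(a,b)$-separation for every non-adjacent pair $a,b$.

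\textbf{One round.} Check whether all pairs in $A\setminus B$ are $\lambda$-vertex-connected. This takes polynomially many max-flow computations. If they are, output $(A,B)$. Otherwise, pick $a,b\in A\setminus B$ together with an $(a,b)$-separation $(C,D)$ of order $<\lambda$. By \Cref{lem:leftrightsep} we may take a minimum one, found in polynomial time.

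We then uncross $(A,B)$ with $(C,D)$ using \Cref{obs:genseps}, which gives two candidates:
\[
(A\cap C,\; B\cup D)\quad\text{and}\quad (A\cap D,\; B\cup C).
\]
Both are proper: $a\in (A\cap C)\setminus(B\cup D)$ and $b\in(A\cap D)\setminus(B\cup C)$. Both also have strictly smaller left side $A\setminus B$ than $(A,B)$, since each loses $b$ or $a$ respectively.

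\textbf{Bounding the candidates' orders.} Apply \Cref{lem:vertsubmod} to the pair $(A,B),(C,D)$ and, separately, to the pair $(A,B),(D,C)$. This gives
\[
|(A\cap C)\cap(B\cup D)| + |(A\cup C)\cap(B\cap D)| \le |A\cap B|+|C\cap D|,
\]
and the symmetric inequality with $C$ and $D$ swapped. The complementary separation $(A\cup C, B\cap D)$ is a separation of $G$, and it is proper whenever $B\setminus(A\cup C)$ is nonempty. In that case it has order $\ge\lambda-k$, since $G$ is $(\lambda-k)$-vertex-connected. Hence
\[
|(A\cap C)\cap(B\cup D)| \le |A\cap B|+|C\cap D|-(\lambda-k).
\]

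\textbf{Main obstacle.} This bound is not yet $<\lambda$, so the core difficulty is proving that one of the two candidates has order $<\lambda$ and that the recursion has depth at most $k$. The fix is to choose $(C,D)$ as a minimum-order separating pair, and to measure progress not by $|A\setminus B|$ but by an order potential.

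Concretely, I would keep the separation $(A,B)$ of \emph{minimum possible order} $\mu<\lambda$ among proper separations with left side inside the current left side. Then:
\begin{itemize}
\item If the complementary side $(A\cup C,B\cap D)$ is proper, it has order $\ge\mu$. Submodularity then gives $|(A\cap C)\cap(B\cup D)|\le|C\cap D|<\lambda$.
\item If instead $B\subseteq A\cup C$, the corresponding separator sizes are counted directly, and $(A\cap C, B\cup D)$ is bounded by similar counting.
\end{itemize}

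Whenever the minimum order cannot be kept, the new separation must have order strictly larger than $\mu$ but still $<\lambda$. This means the order strictly increases, from at least $\lambda-k$ to at most $\lambda-1$, so it can increase at most $k$ times. Between increases, shrinking the left side keeps the recursion polynomial.

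\textbf{Running time.} At each increase step we do not know which of the two sides ($C$ or $D$) to keep, so we branch on both. This gives $2^k$ branches, each of polynomial length, for a total running time of $2^k|G|^{\OO(1)}$.
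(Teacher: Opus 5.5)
The gap is at exactly the step you flag as the main obstacle, and your proposed fix does not close it. Your first bullet asserts that if $(A\cup C,B\cap D)$ is proper, its order is at least $\mu$. But $\mu$ is minimal only among proper separations whose left side lies inside the current left side. Neither $(A\cup C)\setminus(B\cap D)\supseteq A\setminus B$ nor the flipped left side $(B\setminus A)\cap(D\setminus C)$ has that form. You could only derive the inequality if $\mu$ were the global vertex connectivity of $G$, which is not true in general, and in particular not after your order has increased.

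Your remaining claims are asserted without proof: the ``similar counting'' when $B\subseteq A\cup C$, and that the new separation has order $>\mu$ but still $<\lambda$. They are in fact false, because an inclusion-minimal left side of a proper separation of order $<\lambda$ need not be pairwise $\lambda$-vertex-connected. Here is an example.
\begin{itemize}
\item Let $\lambda=3$ and take the connected graph on $a,b,t_1,t_2,s_1,s_2,c,d$ with edges $t_1t_2$, $t_ix$ for $i\in\{1,2\}$ and $x\in\{a,b,s_1,s_2\}$, and $as_1,bs_2,s_1c,s_2d$. The hypotheses hold with $k=2$.
\item The separation $(A,B)=(\{a,b,t_1,t_2,s_1,s_2\},\{s_1,s_2,c,d\})$ has order $2$. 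Every nonempty proper subset of $A\setminus B=\{a,b,t_1,t_2\}$ has at least $3$ neighbours outside itself. So $(A,B)$ satisfies your invariant with $\mu=2$ and admits no shrinking step at all.
\item Yet $(C,D)=(\{a,s_1,c,t_1,t_2\},\{b,s_2,d,t_1,t_2\})$ separates $a$ from $b$ with order $2$.
\item Your two candidates have separators $\{t_1,t_2,s_1\}$ and $\{t_1,t_2,s_2\}$, i.e.\ order $3=\lambda$. Meanwhile $(A\cup C,B\cap D)$ is proper with separator $\{s_2\}$, of order $1<\mu$.
\end{itemize}
The valid outputs here (e.g.\ left side $\{c\}$) lie outside the current left side. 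Your initialization may choose this $(A,B)$, and then the procedure is stuck. Since your proposal obtains existence only from the procedure succeeding, existence is not established either.

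The missing ideas are a global existence argument and a progress measure that does not rely on shrinking.

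For existence, the paper argues as follows. If the minimum degree is $<\lambda$, it takes a minimum-degree vertex $v$ and the separation $(N[v],V(G)\setminus\{v\})$. Otherwise it takes the separation given by a leaf of a tree decomposition with $\lambda$-unbreakable bags and adhesions of size $<\lambda$. Private vertices of the leaf bag have closed neighbourhoods of size $\ge\lambda$ inside the bag, so separating two of them by fewer than $\lambda$ vertices would break the bag.

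For the search, the paper fixes a non-adjacent pair $a,b$ and takes the leftmost minimum $(a,b)$-separation $(A_1,B_1)$ from \Cref{lem:leftrightsep}. If some $u,v\in A_1\setminus B_1$ are not $\lambda$-vertex-connected, it branches on adding the edge $ub$ or $vb$; this is safe because a desired $(a,b)$-separation must put $u$ or $v$ in its right side. Every minimum $(a,b)$-separation has right side contained in $B_1$, so each branch strictly increases the minimum $(a,b)$-separation order. That order starts at $\ge\lambda-k$ and stays $<\lambda$, so the recursion depth is at most $k$, which gives the $2^k$ factor. Trying all pairs $a,b$ then finds the separation whose existence was shown.
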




Note that the vertices in $A \setminus B$ are required to be pairwise $\lambda$-vertex-connected in the whole graph $G$, not in a subgraph of it.
We start by using a tool called \emph{unbreakable tree decomposition} to show the existence of such a separation.

A \emph{tree decomposition} of a graph $G$ is a pair $(T,\bag)$, where $T$ is a tree and $\bag \colon V(T) \to 2^{V(G)}$ a bag-function satisfying
\begin{enumerate}
\item $V(G) = \bigcup_{t \in V(T)} \bag(t)$,
\item $E(G) \subseteq \bigcup_{t \in V(T)} \binom{\bag(t)}{2}$, and
\item for each $v \in V(G)$, the set $\{t \in V(T) \mid v \in \bag(t)\}$ induces a connected subtree of $T$.
\end{enumerate}

For an edge $st \in E(T)$, we denote by $T_{\vec{st}}$ the subtree of $T$ consisting of the nodes closer to $s$ than $t$, and by $T_{\vec{ts}}$ the subtree of $T$ consisting of nodes closer to $t$ than $s$.
We denote by $\bag(T_{\vec{st}})$ the union of the bags of the nodes of $T_{\vec{st}}$.
The important property of tree decompositions is that for all $st \in E(T)$, the pair $(\bag(T_{\vec{st}}), \bag(T_{\vec{ts}}))$ is a separation of $G$.
The \emph{adhesion} at $st$ is the set $\adh(st) = \bag(s) \cap \bag(t) = \bag(T_{\vec{st}}) \cap \bag(T_{\vec{ts}})$, i.e., the separator of the separation.

For an integer $\lambda$, a set of vertices $X \subseteq V(G)$ is \emph{$\lambda$-unbreakable} if there is no separation $(A,B)$ of $G$ of order $< \lambda$ so that $|X \cap A| \ge \lambda$ and $|X \cap B| \ge \lambda$.
We use the following theorem, that was proven by Carmesin, Diestel, Hamann, and Hundertmark~\cite{DBLP:journals/siamdm/CarmesinDHH14} and Cygan, Komosa, Lokshtanov, Pilipczuk, Pilipczuk, Saurabh, and Wahlstr{\"{o}}m~\cite{DBLP:journals/talg/CyganKLPPSW21}, independently of each other but both based on Bellenbaum and Diestel~\cite{bellenbaum2002two}.

\begin{theorem}[\cite{DBLP:journals/siamdm/CarmesinDHH14,DBLP:journals/talg/CyganKLPPSW21}]
\label{thm:leantds}
For every graph $G$ and every integer $\lambda$, there exists a tree decomposition $(T,\bag)$ of $G$ so that $\bag(t)$ is $\lambda$-unbreakable for all $t \in V(T)$, and every adhesion of $(T,\bag)$ has size $< \lambda$.
\end{theorem}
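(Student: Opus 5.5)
The plan is to take, among all tree decompositions of $G$ in which every adhesion has size $<\lambda$, one that minimizes a suitable potential. We then show that if some bag is not $\lambda$-unbreakable, the decomposition can be refined into another one with all adhesions still $<\lambda$ and strictly smaller potential. The trivial decomposition with a single node is a valid starting point, and the potential will be well-founded, so a minimizer exists and must have only $\lambda$-unbreakable bags. For the potential I would use the multiset of bag sizes, compared lexicographically from the largest size downward. Equivalently one can use $\sum_{t} c^{|\bag(t)|}$ for a large base $c$, after first normalizing the decomposition so that no bag is contained in a neighbouring bag; this normalization keeps the number of nodes bounded.

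Suppose a bag $X=\bag(t)$ is breakable. Choose a separation $(A,B)$ of $G$ with $|X\cap A|\ge\lambda$ and $|X\cap B|\ge\lambda$ whose order $|A\cap B|<\lambda$ is minimum among all such separations. Let $S=A\cap B$. The first step is a Menger-type linkedness claim: there are $|S|$ vertex-disjoint paths in $G[B]$ from $S$ to $X\cap B$, and symmetrically $|S|$ vertex-disjoint paths in $G[A]$ from $S$ to $X\cap A$. To prove it, suppose the paths on the $B$ side do not exist. Then Menger gives a smaller separator $S'$ between $S$ and $X\cap B$ inside $B$. Combining it with $(A,B)$ via \Cref{obs:genseps} yields a separation $(A',B')$ with $A\subseteq A'$, $X\cap B\subseteq B'$, and order $|S'|<|S|$. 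Since $A'$ still contains $X\cap A$ and $B'$ still contains $X\cap B$, this separation still breaks $X$, contradicting the minimality of $|S|$. I would use \Cref{lem:vertsubmod} to keep the bookkeeping clean.

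Next comes the refinement, in the style of Bellenbaum--Diestel. Take two copies $T_A$ and $T_B$ of $T$. In $T_A$, set $\bag_A(u)=\bag(u)\cap A$, except that every vertex $x\in B\setminus A$ lying on the path $P_s$ (from the $B$-side linkage) is replaced by the vertex $s\in S$ where that path starts. Vertices of $B\setminus A$ not on any path are simply deleted. Define $T_B$ symmetrically. Then join the copy of $t$ in $T_A$ to the copy of $t$ in $T_B$. I would then verify the following:
\begin{itemize}
\item The tree decomposition axioms hold. Edges inside $A$ survive in $T_A$ and edges inside $B$ survive in $T_B$. Connectivity of each vertex's occurrence set holds because each path $P_s$ is connected and its vertices are contracted onto $s$.
\item Every new adhesion has size at most that of the corresponding old adhesion, since the map contracting each path to its endpoint is injective on $S$ and maps the rest of $B\setminus A$ away. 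The new edge between the two copies of $t$ has adhesion contained in $S$, so its size is $<\lambda$.
\item Every bag $\bag_A(u)$ has size at most $|\bag(u)|$, again by injectivity of the projection.
\item The two copies of $t$ are strictly smaller than $X$: the copy in $T_A$ has size at most $|X\cap A|+|S\setminus X|\le |X|-|X\cap (B\setminus A)|+|S|$, and since $|X\cap B|\ge\lambda>|S|$, this is less than $|X|$. The copy in $T_B$ is bounded symmetrically.
\end{itemize}

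The main obstacle is the potential argument. The refinement roughly doubles the number of bags, and although no bag grows, bags other than $X$ may be copied without shrinking. So plain lexicographic order on sizes is not enough by itself. The fix I would try first is to apply the refinement not to all of $T$ but only where it is needed. Every bag of $T_A$ that is entirely contained in $A$ with no projection, and hence is equal to or a subset of an old bag, can be merged back: the subtree of $T$ on the $A$ side of $t$ away from $S$ is kept only once. Alternatively, one can measure the potential as a sum of $c^{|\bag(u)|}$ with $c>2$ and observe that $X$ is replaced by two strictly smaller bags while other bags are at worst duplicated into sets of the same size. Getting this counting right, possibly via a more refined lexicographic measure as in Cygan et al., is the delicate step. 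The linkedness claim and the axiom checks are routine.
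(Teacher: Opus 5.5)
\textbf{Verdict: there is a genuine gap.} The termination argument, which is the actual content of the theorem, is missing.

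The paper does not prove this theorem. It imports it from \cite{DBLP:journals/siamdm/CarmesinDHH14,DBLP:journals/talg/CyganKLPPSW21}, both of which run a Bellenbaum--Diestel refinement, so your route is the standard one. Your linkedness claim and the axiom checks are essentially fine, with two small slips:
\begin{enumerate}
\item The adhesion bound is not a consequence of injectivity of a projection. It needs the connectivity fact that a path $P^B_s$ meeting $\bag(u)$ and $\bag(w)$, for $uw\in E(T)$, meets $\adh(uw)\cap B$.
\item The $A$-copy of $t$ has size exactly $|X|-|X\cap B|+|S|<|X|$. Your intermediate bound $|X|-|X\cap(B\setminus A)|+|S|$ need not be below $|X|$.
\end{enumerate}
You rightly flag termination as the crux but leave it open, and neither suggested fix works as stated.
\begin{enumerate}
\item Splitting $\bag(u)$ into two copies, one of full size, strictly increases $\sum_u c^{|\bag(u)|}$; even an empty second copy costs $c^0=1$. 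Your argument controls neither how many such nodes there are nor how large the second copy is. So the saving $c^{|X|}-2c^{|X|-1}$ at $t$ need not pay for it. Lexicographic fatness fails for the same reason: the smaller copies of large bags raise the counts at lower sizes.
\item Dropping only the $B$-copies of bags contained in $A$ is not enough. A bag meeting both $A\setminus B$ and $B\setminus A$ can also keep its full size in $T_A$.
\end{enumerate}

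The missing idea is to compare the two copies of a node $u\neq t$ with the adhesion on the side towards $t$. Let $P^A_s\subseteq G[A]$ and $P^B_s\subseteq G[B]$ be the two linkages. Let $S_A(u),S_B(u)\subseteq S$ be the sets of $s$ whose path $P^A_s$, respectively $P^B_s$, meets $\bag(u)$. Then $\bag_A(u)=(\bag(u)\setminus B)\cup S_B(u)$ and $\bag_B(u)=(\bag(u)\setminus A)\cup S_A(u)$. Let $w$ be the neighbour of $u$ on the path from $u$ to $t$ in $T$.

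Every linkage path meeting $\bag(u)$ ends in $X=\bag(t)$, hence meets $\adh(uw)$. This gives $S_A(u)\subseteq S_A(w)$ and $S_B(u)\subseteq S_B(w)$, and by disjointness of the paths, $|S_B(u)|\le|\adh(uw)\cap B|$. Now suppose $|\bag_A(u)|=|\bag(u)|$, that is, $|S_B(u)|=|\bag(u)\cap B|$. Then $\bag(u)\cap B=\adh(uw)\cap B\subseteq\bag(w)$, and hence $\bag_B(u)\subseteq\bag_B(w)$. So the $B$-copy of $u$ can be contracted into the $B$-copy of $w$; contracting a tree edge whose one bag contains the other changes no adhesion. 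The same holds with $A$ and $B$ exchanged.

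After these contractions, every $u\neq t$ leaves either at most one copy, of size at most $|\bag(u)|$, or two copies, each of size less than $|\bag(u)|$. The node $t$ leaves two copies, each of size less than $|X|$. Hence $\Phi=\sum_u 3^{|\bag(u)|}$ strictly decreases. Since $\Phi$ is a positive integer and the one-bag decomposition has no adhesions, a minimiser exists among decompositions with all adhesions of size $<\lambda$, and it has only $\lambda$-unbreakable bags. No normalisation and no bound on $|V(T)|$ is needed.
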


Now we use \Cref{thm:leantds} to prove the existence of a desired separation.
The idea is that we can take a separation corresponding to a leaf of the tree decomposition.

\begin{lemma}
\label{lem:specialsep}
Let $G$ be a graph and $\lambda$ an integer.
If $G$ is not $\lambda$-vertex-connected, then $G$ has a proper separation $(A,B)$ of order $< \lambda$ so that vertices in $A \setminus B$ are pairwise $\lambda$-vertex-connected.
\end{lemma}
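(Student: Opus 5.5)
The plan is an extremal choice rather than a direct use of \Cref{thm:leantds}. Among all proper separations of $G$ of order $<\lambda$, which exist because $G$ is not $\lambda$-vertex-connected, I pick one, $(A,B)$, minimizing $|A \setminus B|$; the minimum is taken over both orientations, so we may assume $|A\setminus B| \le |B \setminus A|$. I then show that the vertices of $A \setminus B$ are pairwise $\lambda$-vertex-connected. Suppose not. Then there are $u,v \in A \setminus B$ and a $(u,v)$-separation $(C,D)$ of order $<\lambda$. In particular $u,v$ are non-adjacent, $u \in C\setminus D$ and $v \in D \setminus C$. The aim is to build, from $(A,B)$ and $(C,D)$, a proper separation of order $<\lambda$ whose small side is strictly smaller than $A\setminus B$.

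The candidates are the corners provided by \Cref{obs:genseps}. First consider $(A\cap C, B \cup D)$ and $(A \cap D, B \cup C)$. The left interior of the first contains $u$ but not $v$, and that of the second contains $v$ but not $u$. Both left interiors lie inside $A\setminus B$, hence are strictly smaller than $A \setminus B$. Both right interiors contain $B\setminus A \neq \emptyset$. So if either corner has order $<\lambda$, it contradicts minimality and we are done. Otherwise, apply \Cref{lem:vertsubmod} to $(A,B),(C,D)$ and to $(A,B),(D,C)$. Since $|A\cap B|+|C\cap D| \le 2\lambda-2$, the opposite corners $(A\cup C, B\cap D)$ and $(A\cup D, B\cap C)$ both have order $\le \lambda-2$.

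The main obstacle is this last case. Here I would use the opposite corners together with the choice $|A\setminus B|\le|B\setminus A|$. If $(B\cap D)\setminus(A\cup C)$ is non-empty, then $(A\cup C, B\cap D)$ is a proper separation of order $<\lambda$ with small interior contained in $B\setminus A$. I need to compare its size to $|A\setminus B|$. I expect this requires a sharper count: track the four pieces of the separators $A\cap B\cap C$, $A \cap B \cap D$, and so on, as in the usual proof that crossing minimal separations yield small corners. The goal is to show that either some corner containing exactly one of $u,v$ has order $<\lambda$, or $B\setminus A$ is covered by $C\cap D$ in a way that makes $|C \cap D|$ too large.

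If the extremal argument gets stuck, the fallback is \Cref{thm:leantds}. Take an unbreakable decomposition with the minimum number of nodes, a leaf $t$ with neighbour $s$, and $(A,B) = (\bag(t), \bag(T_{\vec{st}}))$. Then refine inside this leaf by choosing, among separations whose left interior is contained in $\bag(t)\setminus \adh(st)$, one that is inclusion-minimal. Unbreakability should rule out splitting two far-apart vertices of the leaf, and minimality handles the rest via the same corner argument.
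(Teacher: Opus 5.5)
Your main line cannot be completed: the extremal claim it relies on is false, so no sharper count in the ``last case'' can rescue it. Take $\lambda=3$ and let $G$ be a path of cliques $Q_1,\dots,Q_7$ of sizes $8,1,2,2,2,1,8$. Each $Q_i$ is a clique, consecutive $Q_i$ are completely joined, and there are no other edges. The separator of any proper separation contains some entire $Q_k$ with $2\le k\le 6$. So a proper separation of order $\le 2$ has separator $Q_3$, $Q_4$, $Q_5$, or $Q_2$ or $Q_6$ plus at most one more vertex. Listing the components of $G$ minus each such separator shows that every interior has at least $6$ vertices. Equality holds only for $Q_3\cup Q_4\cup Q_5$, which is cut off by $Q_2\cup Q_6$. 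Thus your $(A,B)$, which is also inclusion-minimal, has $A\setminus B=Q_3\cup Q_4\cup Q_5$. Yet $Q_4$ separates $Q_3$ from $Q_5$ with $2<\lambda$ vertices. Take $u\in Q_3$, $v\in Q_5$ and $(C,D)$ the separation at $Q_4$; you land exactly in your last case. The corners containing $u$ or $v$ have separators $Q_2\cup Q_4$ and $Q_4\cup Q_6$ of order $3$. The opposite corners have order $1$ but interiors $Q_1$ and $Q_7$ of size $8>6$. The side that witnesses the lemma here is $Q_1$, a leaf of the structure rather than the smallest side: size-minimality does not detect leaves.

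Your fallback is close to the paper's route, but it omits the idea that makes unbreakability useful. Unbreakability of $\bag(t)$ only forbids separations of order $<\lambda$ with at least $\lambda$ vertices of $\bag(t)$ on \emph{each} side. So it says nothing about a vertex $u$ with $|N(u)|<\lambda$, which $(N[u],V(G)\setminus\{u\})$ cuts off. The paper handles this case first. If some vertex has degree $<\lambda$, a minimum-degree vertex $v$ gives the proper separation $(N[v],V(G)\setminus\{v\})$ with singleton interior, unless $G$ is a clique. Otherwise the minimum degree is at least $\lambda$. For $u,v$ in the private part $\bag(t)\setminus\adh(st)$ of a leaf bag we have $N[u],N[v]\subseteq\bag(t)$. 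A $(u,v)$-separation $(C,D)$ of order $<\lambda$ would then satisfy $|C\cap\bag(t)|\ge|N[u]|\ge\lambda$ and $|D\cap\bag(t)|\ge\lambda$, contradicting unbreakability. No refinement or corner argument is needed. The same count shows that a one-node decomposition forces $G$ to be $\lambda$-vertex-connected, a case your fallback leaves open. Your inclusion-minimal refinement can be made to work, but only through this degree argument: a vertex $u$ of degree $<\lambda$ in an interior of size $\ge 2$ yields the strictly smaller interior $\{u\}$. ``The same corner argument'' does not do it, since that is precisely what fails in the example above.
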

\begin{proof}
If $G$ has a vertex of degree $< \lambda$, then we can pick such a vertex $v$ with the smallest degree and let the separation be $(N[v], V(G) \setminus \{v\})$, unless $G$ is a clique in which case it is $\lambda$-vertex-connected.
Assume then that the minimum degree of $G$ is $\ge \lambda$.

By \Cref{thm:leantds}, let $(T,\bag)$ be a tree decomposition of $G$ so that $\bag(t)$ is $\lambda$-unbreakable for all $t \in V(T)$ and each adhesion of $(T,\bag)$ has size $< \lambda$.
Note that if such a tree decomposition has a leaf $\ell$ with neighbor $p$ with $\bag(\ell) \subseteq \bag(p)$, we can remove the leaf $\ell$.
Therefore we can assume that for each leaf $\ell$ with neighbor $p$, it holds that $\bag(\ell) \not\subseteq \bag(p)$.
This in particular implies that for all $st \in E(T)$, the separation $(\bag(T_{\vec{st}}), \bag(T_{\vec{ts}}))$ is proper.

If $(T,\bag)$ has only one bag, then $V(G)$ is $\lambda$-unbreakable, which implies that $G$ is $\lambda$-vertex-connected.
If $(T,\bag)$ has more than one bag, let $\ell \in V(T)$ be an arbitrary leaf with neighbor $p$.
We claim that the separation $(A,B) = (\bag(T_{\vec{\ell p}}), \bag(T_{\vec{p \ell}})) = (\bag(\ell), V(G) \setminus (\bag(\ell) \setminus \adh(\ell p)))$ satisfies the required properties.

As already observed, it is proper, and by the definition of $(T,\bag)$ has order $< \lambda$.
The vertices in $A \setminus B$ are the vertices that are only in the bag $\bag(\ell)$.
For any such vertex $v \in A \setminus B$, it must hold that $N[v] \subseteq \bag(\ell)$.
Suppose that two such vertices $u, v \in A \setminus B$ are not $\lambda$-vertex-connected, i.e., there exists a separation $(C,D)$ of $G$ of order $< \lambda$ with $u \in C \setminus D$ and $v \in D \setminus C$.
Now, $N[u] \subseteq C$ and $N[v] \subseteq D$, so $|C \cap \bag(\ell)| \ge |N[u]|$ and $|D \cap \bag(\ell)| \ge |N[v]|$.
Because $|N[u]| \ge \lambda$ and $|N[v]| \ge \lambda$, this contradicts the $\lambda$-unbreakability of $\bag(\ell)$.
\end{proof}

We then show that a separation whose existence is guaranteed by \Cref{lem:specialsep} can also be found efficiently enough.
We actually consider the special case when there are fixed vertices $a,b$ so that the separation is an $(a,b)$-separation, as that implies the general case.

\begin{lemma}
\label{lem:specialsepalgspecial}
Let $G$ be a graph, $a,b \in V(G)$ distinct non-adjacent vertices, and $\lambda,k \ge 1$ so that $a$ and $b$ are not $\lambda$-vertex-connected but are $(\lambda-k)$-vertex-connected.
Then, we can in $2^k |G|^{\OO(1)}$ find an $(a,b)$-separation $(A,B)$ of order $<\lambda$ so that vertices in $A \setminus B$ are pairwise $\lambda$-vertex-connected, if such a separation exists.
\end{lemma}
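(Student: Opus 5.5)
We will give a recursive branching algorithm whose depth is bounded by $k$. The measure is $\mu = \lambda - \kappa(a,b)$, where $\kappa(a,b)$ is the minimum order of an $(a,b)$-separation; by assumption $1 \le \mu \le k$.

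\textbf{Step 1: the left-most separation.} First compute, using \Cref{lem:leftrightsep}, the left-most minimum $(a,b)$-separation $(A_1,B_1)$. Its order is $\kappa(a,b) < \lambda$. Check in polynomial time, with max-flow computations, whether all pairs of vertices in $A_1 \setminus B_1$ are $\lambda$-vertex-connected in $G$.

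\begin{itemize}
\item If they are, output $(A_1,B_1)$.
\item Otherwise, fix a pair $u,v \in A_1 \setminus B_1$ that is not $\lambda$-vertex-connected, together with a separation $(C,D)$ of order $<\lambda$ with $u \in C\setminus D$ and $v \in D \setminus C$.
\end{itemize}

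\textbf{Step 2: the branching.} Let $(A,B)$ be any desired $(a,b)$-separation of order $<\lambda$ with $A\setminus B$ pairwise $\lambda$-connected. Then $A\setminus B$ contains at most one of $u,v$. I would branch on a vertex $w \in \{u,v\}$ that we force to lie outside $A \setminus B$, i.e.\ into $B$.

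To force $w \in B$, add the edge $wb$ (or identify $w$ with $b$) to obtain $G'$. The point of Step 1 is that $A_1 \subseteq A$ for every minimum $(a,b)$-separation $(A,B)$, so $w$ lies on the left side of every minimum separation. Hence no minimum $(a,b)$-separation of $G$ survives in $G'$, and $\kappa_{G'}(a,b) \ge \kappa_G(a,b)+1$, which decreases $\mu$. If $\mu$ reaches $\le 0$, that branch fails.

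There are two branches per level and depth at most $k$, giving $2^k |G|^{\OO(1)}$ time.

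\textbf{Step 3: soundness.} Two things must be checked.

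\begin{itemize}
\item An $(a,b)$-separation of $G'$ of order $<\lambda$ is still a separation of $G$ with $w\in B$. This holds because removing an edge keeps separations valid.
\item The property ``$A\setminus B$ pairwise $\lambda$-connected'' must be checked in $G$, not in $G'$. So the recursion should carry the set $F$ of forced vertices and keep $G$ fixed, adding the edges $fb$ only in the auxiliary graph $G_F = G\cup\{fb : f \in F\}$. The recursion then uses the left-most minimum $(a,b)$-separation of $G_F$, and tests pairwise $\lambda$-connectivity of its left side in $G$.
\end{itemize}

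Completeness then goes by induction. If a desired separation $(A,B)$ with $F \subseteq B$ exists, one of the two branches keeps it valid, since $A\setminus B$ excludes one of $u,v$. The edges $fb$ for $f\in F \subseteq B$ do not cross $(A,B)$, because $b\in B\setminus A$, so $(A,B)$ remains a separation of $G_F$.

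\textbf{Main obstacle.} The step I expect to be hardest is showing that $\kappa$ strictly increases, i.e.\ that $u$ and $v$ both lie in the left part of every minimum separation of $G_F$. This follows from $u,v \in A_1\setminus B_1$ together with $A_1 \subseteq A$ for every minimum $(A,B)$ from \Cref{lem:leftrightsep}. A vertex of $A_1\setminus B_1$ could lie in a separator $A\cap B$ of another minimum separation. However, then the added edge $wb$ still makes $(A,B)$ a separation, so $w$ must be placed in $B\setminus A$ rather than merely in $B$.

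To handle this, force $w \in B\setminus A$ (edge $wb$ together with $w$ playing the role of a sink), equivalently merge $w$ into $b$. Every minimum separation of $G_F$ has $w\in A$, so none survives the merge, and the order strictly increases.

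The residual case $w \in A\cap B$ of the target separation must also be covered. For this I would add a third branch type: any vertex of $A\cap B$ can be moved to $B\setminus A$ while keeping $A\setminus B$ pairwise connected, provided it does not create edges across. If that fails, I would instead branch on $w\in B\setminus A$ versus $w \in A \cap B$, and in the latter case make $w$ adjacent to both $a$ and $b$, which also raises $\kappa$. This keeps the branching factor constant, so the total time is $c^k$, and I would tune this to $2^k$.
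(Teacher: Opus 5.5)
Your Steps 1--3 are the paper's algorithm: take the left-most minimum $(a,b)$-separation, force $u$ or $v$ into $B$ by adding an edge to $b$, and use the measure $\lambda-\kappa(a,b)$. Your remark that pairwise $\lambda$-connectivity must be tested in the original $G$ rather than in $G_F$ is a useful precision. The gap is in the one step that carries the proof, namely that $\kappa$ strictly increases.

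You justify this only by $A_1\subseteq A$, which gives $w\in A$ but not $w\notin B$. Your ``main obstacle'' paragraph then asserts that a vertex of $A_1\setminus B_1$ could lie in the separator of another minimum separation. That is false. \Cref{lem:leftrightsep} also gives $B\subseteq B_1$ for every minimum $(a,b)$-separation $(A,B)$, so $w\in A_1\setminus B_1$ implies $w\in A\setminus B$. Hence the edge $wb$ crosses every minimum separation. Indeed, suppose $(C,D)$ were an $(a,b)$-separation of $G_F\cup\{wb\}$ of order $\kappa$. It would also be a minimum $(a,b)$-separation of $G_F$, and then $w\in N[b]\subseteq D\subseteq B_1$, a contradiction. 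With this, the plain two-way branch of your Step~2 already gives the $2^k$ bound.

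The workaround you propose instead does not repair the gap. Take the branch ``$w\in A\cap B$'', where you make $w$ adjacent to both $a$ and $b$. A minimum separation with $w$ in its separator, which is exactly the situation you were worried about, would survive that branch, so $\kappa$ would not increase. Moreover, splitting each of $u,v$ into two cases gives $4^k$ branches, and ``tune this to $2^k$'' is left unsupported.

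The case distinction is also unnecessary for completeness. A desired separation has $u\in B$ or $v\in B$. An $(a,b)$-separation of $G_F$ survives the addition of $wb$ exactly when $w\in B$. So adding $wb$ forces $w\in B$ and covers both $w\in A\cap B$ and $w\in B\setminus A$, as your own Step~3 argument already shows.
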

\begin{proof}
We design a recursive algorithm for the problem, where the parameter $k$ decreases on each recursion level.
The parameter $\lambda$ and the vertices $a$ and $b$ stay the same throughout the recursion, while the graph $G$ is changed by adding edges to it.

Let $\lambda' \ge \lambda-k$ be the minimum order of an $(a,b)$-separation.
We can assume that $\lambda' < \lambda$; otherwise we can return that no desired separation exists.
By \Cref{lem:leftrightsep}, let $(A_1, B_1)$ be the $(a,b)$-separation of order $\lambda'$ so that for all $(a,b)$-separations $(A,B)$ of order $\lambda'$ it holds that $A_1 \subseteq A$ and $B \subseteq B_1$.
The separation can be found in polynomial time.
We can also check in polynomial time if vertices in $A_1 \setminus B_1$ are pairwise $\lambda$-vertex-connected.
If yes, we have found the desired separation.
If no, let $u,v \in A_1 \setminus B_1$ so that $u$ and $v$ are not $\lambda$-vertex-connected.
Any desired separation $(A,B)$ must have either $u \in B$ or $v \in B$, so we recurse into two subproblems, one that forces $u$ into $B$ and one that forces $v$ into $B$.
In particular, the former corresponds to adding the edge $ub$ to $G$ and the latter to adding the edge $vb$ to $G$.

It is clear that this recursive procedure is correct.
To bound the running time, it suffices to show that we can decrease $k$ by one when calling the algorithm recursively.
For this, it suffices to show that if $v \in A_1 \setminus B_1$, and $G^v$ is the graph $G$ with the edge $vb$ added, then $a$ and $b$ are $(\lambda'+1)$-vertex-connected in $G^v$.

Suppose otherwise and let $(C,D)$ be an $(a,b)$-separation of $G^v$ of order $\lambda'$.
It is also an $(a,b)$-separation of $G$.
By the guarantee of \Cref{lem:leftrightsep}, we therefore have that $D \subseteq B_1$.
However, this is a contradiction, as $v \in D$ but $v \notin B_1$.
\end{proof}

Now we can put together the proof of \Cref{lem:specialsepalg}.

\begin{proof}[Proof of~\Cref{lem:specialsepalg}]
By \Cref{lem:specialsep} the desired separation $(A,B)$ exists.
It is a proper separation, and therefore an $a,b$-separation for some non-adjacent $a,b \in V(G)$.
By trying all pairs $a,b$ with the algorithm of \Cref{lem:specialsepalgspecial}, we can find such a separation.
\end{proof}

\subsection{Irrelevant links}\label{sec:irrel-links}
Let $\ins = (G, L, \lambda, k)$ be an instance of the vertex connectivity augmentation problem, and $L' \subseteq L$ a subset of the links.
A subset $L'' \subseteq L'$ is an \emph{$L'$-relevant} set if whenever there exists a solution to $\ins$ that contains a link from $L'$, there exists a solution to $\ins$ that contains a link from $L''$.
We say that a link $uv \in L'$ is \emph{$L'$-irrelevant} if $L' \setminus \{uv\}$ is $L'$-relevant.

Our algorithm is based on removing $L'$-irrelevant links from a set $L'$ until the size of $L'$ is bounded by $\OO(k^2 \lambda^2)$, and then branching on which of the remaining links is included in a solution.
We give two subroutines for finding irrelevant links.
Both of them need the following lemma for concluding that an instance is a no-instance under certain conditions.

\begin{lemma}
\label{lem:disjointseps}
Let $\ins = (G, L, \lambda, k)$ be an instance of the vertex connectivity augmentation problem, and $(A_1, B_1), \ldots, (A_\ell, B_\ell)$ a collection of $\ell > 2k$ proper separations of $G$ of order $< \lambda$, so that for every $(A_i, B_i)$ there exists a vertex $b_i \in B_i \setminus A_i$ that is not in any set $B_j$ for $j \neq i$.
Then, $\ins$ is a no-instance.
\end{lemma}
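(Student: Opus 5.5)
Suppose for contradiction that $S \subseteq L$ is a solution with $|S| \le k$. We will show that $S$ must have more than $k$ links by a counting argument over the links' endpoints.

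By the characterization in the preliminaries, every proper separation of order $<\lambda$ is crossed by some link of $S$. So for each $i \in [\ell]$ there is a link $e_i = u_iv_i \in S$ with $u_i \in A_i \setminus B_i$ and $v_i \in B_i \setminus A_i$. The vertex $b_i$ itself need not be an endpoint of $e_i$, so we should use $b_i$ only to separate the sets $B_i$ from one another, and derive the bound from the endpoints $v_i$.

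The key step is a charging argument with $2k \ge |S| \cdot 2$ endpoint slots, showing that each endpoint is charged at most once. For a link $e \in S$ with endpoints $x,y$, suppose it serves as $e_i$ for several indices $i$. Each such $i$ has one endpoint of $e$ in $B_i \setminus A_i$ and the other in $A_i \setminus B_i$. Assign to $i$ the endpoint lying in $B_i \setminus A_i$. We must show that no endpoint $x$ is assigned to two indices $i \ne j$, i.e.\ that we cannot have $x \in (B_i \setminus A_i) \cap (B_j \setminus A_j)$ with $y \in (A_i\setminus B_i) \cap (A_j \setminus B_j)$. A pure set-theoretic argument does not suffice here, so I would use the separations more cleverly.

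Instead, I would modify the choice of crossing link, using \Cref{obs:genseps}. For a fixed $i$, consider the separation $(A_i', B_i') = (A_i \cup \bigcup_{j\neq i} B_j,\ \ldots)$. More precisely, intersect: let $(C_i, D_i) = \bigl(A_i \cup \bigcup_{j \ne i}(\cdot),\ B_i \cap \ldots\bigr)$. Since its order may grow, this route is problematic. I would rather redo the count by observing that each $i$ needs a link with an endpoint in $B_i \setminus A_i$, and argue via the $b_i$ after all. The honest plan is as follows. Set $W_i = B_i \setminus A_i$. The fact that $b_i \in W_i$ and $b_i \notin B_j$ shows that no $W_i$ is contained in $\bigcup_{j\neq i} B_j$. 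The difficulty is that endpoints can be shared.

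I expect this step to be the main obstacle. The first thing I would try is to replace $\ins$ by the instance in which every separation $(A_i,B_i)$ is taken to be, among separations of order $<\lambda$ with $b_i \in B\setminus A$ and $B \subseteq B_i$, one with $B$ inclusion-minimal. Uncrossing then makes $B_i \setminus A_i$ pairwise disjoint: if two of these regions overlapped, \Cref{lem:vertsubmod} applied to $(A_i,B_i)$ and $(A_j,B_j)$ would produce a smaller valid side, using $b_i \notin B_j$. Once the regions are disjoint, each link has two endpoints and so touches at most two of the regions $B_i\setminus A_i$. Since each of the $\ell$ regions must be touched, we get $\ell \le 2|S| \le 2k$, a contradiction. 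The submodularity computation needed to verify the disjointness is the expected hard part.
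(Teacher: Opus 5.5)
Your final plan is the paper's proof. You re-choose each separation with a minimal $B$-side, uncross so that the regions $B_i\setminus A_i$ become pairwise disjoint, and then count the at most $2|S|\le 2k$ link endpoints. The paper minimizes $|B_i|$ with $b_1,\dots,b_\ell$ fixed, but your inclusion-minimality subject to $B\subseteq B_i$ works equally well, since uncrossing only shrinks $B_i$.

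The step you deferred is short, but \Cref{lem:vertsubmod} must be applied to $(B_i,A_i)$ and $(A_j,B_j)$, not literally to $(A_i,B_i)$ and $(A_j,B_j)$. That gives that one of $(A_i\cup B_j,\, B_i\cap A_j)$ and $(A_j\cup B_i,\, B_j\cap A_i)$ has order $<\lambda$. That one keeps $b_i$ (resp.\ $b_j$) in its $B$-side minus $A$-side because $b_i\notin B_j$, and strictly shrinks $B_i$ (resp.\ $B_j$) because it drops the shared vertex of the two regions.
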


\begin{proof}
Assume that $\ell = 2k+1$.
For each $i \in [\ell]$, let $(A_i, B_i)$ be a separation of $G$ of order $< \lambda$ and $b_i$ a vertex so that $b_i \in B_i \setminus A_i$, and $b_i \in A_j \setminus B_j$ for all $j \neq i$.
Furthermore, assume that each $(A_i, B_i)$ minimizes $|B_i|$ (after fixing the vertices $b_1, \ldots, b_\ell$) among all such separations.

\begin{claim}
The sets $B_i \setminus A_i$ and $B_j \setminus A_j$ are disjoint for $i \neq j$.
\end{claim}
\begin{claimproof}
Suppose that $B_i \setminus A_i$ and $B_j \setminus A_j$ overlap.
Consider the separations $(A_i', B_i') = (A_i \cup B_j, B_i \cap A_j)$ and $(A_j', B_j') = (A_j \cup B_i, B_j \cap A_i)$.
Note that $b_i \in B_i' \setminus A_i'$ because $b_i \notin B_j$, and similarly $b_j \in B_j' \setminus A_j'$.

By submodularity (\Cref{lem:vertsubmod}),
\[|A_i' \cap B_i'| + |A_j' \cap B_j'| \le |A_i \cap B_i| + |A_j \cap B_j| < 2\lambda.\]

It follows that either the order of $(A_i', B_i')$ is $<\lambda$, or the order of $(A_j', B_j')$ is $< \lambda$.
In the former case, we could replace $(A_i, B_i)$ by $(A_i', B_i')$ while satisfying the requirement and making $|B_i|$ smaller, and in the latter case we could replace $(A_j, B_j)$ by $(A_j', B_j')$ while satisfying the requirement and making $|B_j|$ smaller.
So either of the cases would contradict the choice of $(A_1, B_1), \ldots, (A_\ell, B_\ell)$.
\end{claimproof}

Because the separations $(A_i, B_i)$ are proper and have order $< \lambda$, each of them must be crossed by at least one link in any solution to $\ins$.
However, as the sets $B_i \setminus A_i$ are disjoint, one link can cross at most two of the separations, so because $\ell > 2k$, there is no solution to $\ins$.
\end{proof}

We also need the following easy lemma about the existence of large independent sets on graphs of small average degree.

\begin{lemma}
\label{lem:avgdegindset}
Let $c \ge 1$ be an integer and $H$ a $q$-vertex graph with at most $c q$ edges.
Then $H$ contains an independent set of size $\ge q/(8c)$.
\end{lemma}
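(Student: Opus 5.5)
The plan is the standard low-degree deletion argument, arranged so the constant comes out as $8c$. Since $H$ has at most $cq$ edges, the sum of degrees is at most $2cq$, so the average degree is at most $2c$. We may assume $q \ge 1$; if $q/(8c) \le 1$, any single vertex is already an independent set of the required size. So the interesting case is $q > 8c$.

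\textbf{Step 1.} Restrict to the vertices of small degree. Let $U$ be the set of vertices of degree at most $4c$. By Markov's inequality, at most $q/2$ vertices have degree $> 4c$, since otherwise the degree sum would exceed $(q/2)\cdot 4c = 2cq$. Hence $|U| \ge q/2$.

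\textbf{Step 2.} Run the greedy procedure on the induced subgraph $H[U]$. Repeatedly pick any remaining vertex of $U$, add it to $I$, and delete it together with its neighbours in $U$. Each step removes at most $4c+1$ vertices of $U$, so the procedure runs for at least
\[
\frac{|U|}{4c+1} \;\ge\; \frac{q}{2(4c+1)}
\]
steps. Each chosen vertex is non-adjacent to all later choices, so the resulting $I$ is independent in $H$.

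\textbf{Step 3.} Compare with the target bound. Since $4c+1 \le 5c$ for $c \ge 1$, we get $|I| \ge q/(10c)$, which falls short of $q/(8c)$. I would therefore fix the constant in one of two ways.

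\emph{First option (Turán/Caro--Wei).} Use the bound $\alpha(H) \ge \sum_v 1/(d(v)+1)$. By convexity, this is at least
\[
\frac{q}{\bar d + 1} \;\ge\; \frac{q}{2c+1} \;\ge\; \frac{q}{3c} \;\ge\; \frac{q}{8c}.
\]
This is cleanest, but it cites Caro--Wei.

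\emph{Second option (self-contained greedy).} Always pick a vertex of minimum degree in the current graph, then delete it and its neighbours. The remaining graph still has average degree at most $2c$ at every step, because removing a minimum-degree vertex and its neighbourhood does not increase the average degree. This is the classical Turán greedy argument and gives $\alpha(H) \ge q/(2c+1)$.

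I would present the second option, with the invariant stated as: the number of edges deleted in each step is at least $(d_{\min}+1)\,d_{\min}/2$. Summing over all steps gives the bound.

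The only real obstacle is bookkeeping the constant. Step 1 alone gives $q/(10c)$, so one of the two refinements in Step 3 is needed to reach $q/(8c)$; the stated bound is loose, so either refinement suffices.
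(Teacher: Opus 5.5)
Your first option (Caro--Wei plus convexity, giving $q/(2c+1)$) is a complete and correct proof of a stronger bound. The paper does not need it: its proof is exactly your Steps 1--2, and the loss to $q/(10c)$ there comes from your choice of threshold. The same counting shows that at most $q/2$ vertices have degree $\ge 4c$, since otherwise the degree sum would exceed $(q/2)\cdot 4c = 2cq$. So at least $q/2$ vertices have degree $\le 4c-1$. Each greedy step then discards at most $4c$ of them, which gives $\frac{q/2}{4c} = q/(8c)$ exactly. That is the paper's entire argument. It is self-contained, whereas your option 1 imports Caro--Wei; your diagnosis that a refinement in Step 3 is \emph{needed} is therefore mistaken.

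The second option, which you say you would present, rests on a false claim. Deleting $N[v]$ for a minimum-degree vertex $v$ can increase the average degree. For $c=1$, take two disjoint copies of $K_2$ and a $K_4$: this has $q=8$ vertices and $8$ edges. After the greedy deletes one $K_2$, you have $6$ vertices and $7$ edges, so the average degree is $7/3 > 2c$. Hence the invariant ``average degree at most $2c$ at every step'' cannot be used to bound each step's removal by $2c+1$.

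The classical greedy proof of Tur\'an's bound uses only the edge-count invariant you state, plus a convexity step that ``summing over all steps'' glosses over. Suppose the greedy runs $r$ steps and step $i$ removes $s_i = d_i+1$ vertices. Then $\sum_i s_i = q$ and $\sum_i s_i(s_i-1) \le 2|E(H)| \le 2cq$, so $\sum_i s_i^2 \le (2c+1)q$. Cauchy--Schwarz gives $q^2 \le r \sum_i s_i^2 \le r(2c+1)q$, that is, $r \ge q/(2c+1)$. With this repair the second option is correct; alternatively, the corrected threshold in Step 1 already suffices.
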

\begin{proof}
The graph $H$ has at most $q/2$ vertices of degree $\ge 4c$, because otherwise it would have more than $c q$ edges.
Therefore, $H$ has at least $q/2$ vertices of degree $< 4c$, so the greedy method of repeatedly selecting an arbitrary such vertex and discarding its neighbors results in an independent set of size $\ge \frac{q/2}{4c} \ge q/(8c)$.
\end{proof}

The first of our two subroutines for finding irrelevant links finds an irrelevant link among a large set of links all incident to one vertex, and the second one finds an irrelevant link among a large matching of links that satisfies certain conditions.
The first of the subroutines is inspired by Lemma~3.16 of~\cite{aug4vc}, while the second can be regarded as a more technical extension of the same ideas.

Now we give the subroutine for finding an irrelevant link among a set of links adjacent to one vertex.

\begin{lemma}
\label{lem:degirrelev}
Let $\ins = (G, L, \lambda, k)$ be an instance, $L' \subseteq L$ a set of links, and $v \in V(G)$ a vertex so that every link in $L'$ is incident to $v$.
If $|L'| \ge 20 \lambda k$, we can in time $(k+\lambda)^{\OO(1)} |G|$ either find an $L'$-irrelevant link or conclude that $\ins$ is a no-instance.
\end{lemma}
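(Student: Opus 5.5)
Write $L' = \{vu_1,\dots,vu_m\}$ with $m \ge 20\lambda k$ and let $U=\{u_1,\dots,u_m\}$ be the other endpoints. The idea is domination. We say link $vu_i$ is \emph{dominated} by $vu_j$ if every proper separation of $G$ of order $<\lambda$ crossed by $vu_i$ is also crossed by $vu_j$. In that case, replacing $vu_i$ by $vu_j$ in any solution gives a solution, or a smaller set if $vu_j$ was already present; a solution containing a link of $L'$ thus yields one containing a link of $L'\setminus\{vu_i\}$, so $vu_i$ is $L'$-irrelevant. The plan is therefore to find a dominated link among $L'$, or else produce more than $2k$ separations as in \Cref{lem:disjointseps}.

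\textbf{Step 1: thin out to an independent set.} The vertices $u_i \in N(v)$ are irrelevant in the sense that $vu_i\notin E(G)$ by definition of links, so all $u_i\notin N[v]$. Links $vu$ only cross separations with $v\in A\setminus B$ and $u\in B\setminus A$. To apply \Cref{lem:disjointseps} we want the $b_i=u_i$ to be separable from each other. Since neighbors of $u_i$ lie on $B$-side, edges inside $U$ interfere. So we first pass to a large subset $U'\subseteq U$ with no long-range structure. Concretely, I would guess a bounded-degree-type argument: $U$ itself might be dense, so the fallback is to consider only $\lambda$-relevant pairs. For $i\ne j$, consider whether $vu_j$ dominates $vu_i$. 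If it does not, there is a proper separation $(A_i,B_i)$ of order $<\lambda$ with $v\in A_i\setminus B_i$, $u_i\in B_i\setminus A_i$, and $u_j\in A_i$. Take, via \Cref{lem:leftrightsep} applied to the pair $(v,u_i)$ at each order $<\lambda$, the separation with minimal $B$-side, so that "$u_j \in B_i$" becomes a well-defined relation independent of $j$. Build the auxiliary graph $H$ on $U$ with $u_iu_j$ an edge if $u_j\in A_i\cap B_i$, i.e. $u_j$ lies in the separator of $u_i$'s chosen separation. Each separator has $<\lambda$ vertices, so $H$ has fewer than $\lambda m$ edges. By \Cref{lem:avgdegindset} with $c=\lambda$, $H$ has an independent set $I$ of size $\ge m/(8\lambda)\ge 2.5k>2k$.

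\textbf{Step 2: conclude.} Restrict attention to $I$, which has more than $2k$ elements. If some $u_j\in I$ lies in $B_i\setminus A_i$ for some $i\ne j$, then every separation crossed by $vu_i$ that is "leftmost" is also crossed by $vu_j$. I expect the main obstacle here: one must upgrade this to full domination over \emph{all} separations crossed by $vu_i$, not only the chosen one. My first attempt is an uncrossing argument via \Cref{obs:genseps} and \Cref{lem:vertsubmod}: intersect an arbitrary separation crossed by $vu_i$ but not $vu_j$ with the chosen minimal one, and use minimality of $|B_i|$ to derive a contradiction. If instead domination fails for every $i$ in this way, then each $u_i\in I$ lies outside all $B_j$, $j\ne i$, since $u_i\notin A_j\cap B_j$ by independence and $u_i\notin B_j\setminus A_j$ by assumption. \Cref{lem:disjointseps} with $b_i=u_i$ then declares a no-instance.

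\textbf{Running time.} Only the first $20\lambda k$ links are needed. For each, we compute the $\lambda$ or fewer candidate leftmost/rightmost minimum separations by flow computations bounded by $\lambda$ augmentations, each linear in $|G|$. This gives $(k+\lambda)^{\OO(1)}|G|$ total time.
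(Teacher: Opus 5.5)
Your proposal has a genuine gap, and it is the step you flag as the ``main obstacle'': the upgrade in Step~2 is false. Worse, a yes-instance satisfying the hypothesis can have no link that is dominated in your pairwise sense. Take $\lambda=3$, $k=2$, a clique on $R=\{v,p,r,r'\}$, a path $u_1u_2\cdots u_m$ with $m\ge 120$, and $p$ adjacent to every $u_i$. Let $L=\{vu_1,\dots,vu_m,u_1r\}$ and $L'=\{vu_1,\dots,vu_m\}$.
\begin{itemize}
\item The separator $\{p,u_{i+1}\}$ cuts off $\{u_1,\dots,u_i\}$, and $\{p,u_{i-1}\}$ cuts off $\{u_i,\dots,u_m\}$. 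Hence no $vu_j$ dominates $vu_i$ in your sense.
\item Yet $\{vu_m,u_1r\}$ is a solution, and it is the only one. Both $u_1$ and $u_m$ have degree $2$, $vu_m$ is the only link at $u_m$, and $\{vu_1,vu_m\}$ leaves the $2$-separator $\{p,v\}$.
\item Every minimum $(v,u_i)$-separation is $(R,\{p\}\cup\{u_1,\dots,u_m\})$. So your graph $H$ is edgeless and $u_1\in B_m\setminus A_m$. For the pair $(i,j)=(m,1)$ the upgrade would certify $vu_m$ as irrelevant.
\item But $vu_m$ lies in the unique solution, and no solution contains any other link of $L'$, so $vu_m$ is not $L'$-irrelevant.
\end{itemize}
No uncrossing argument can repair this, because the conclusion itself is false. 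The chosen $(A_i,B_i)$ is a minimum-order separation, and it says nothing about separations of order strictly between the minimum and $\lambda-1$, which need not be nested with it.

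The missing idea is to let the replacement link depend on the solution. Call $vx_i$ dominated if \emph{no} $(v,x_i)$-separation of order $<\lambda$ has all other $x_j$ in $A$. Equivalently, each such separation is crossed by \emph{some} other link of $L'$; in the example, $vu_2,\dots,vu_{m-1}$ qualify.
\begin{itemize}
\item \emph{Dominated implies irrelevant.} Let $S\ni vx_i$ be a solution and $G''=(G\cup S)\setminus\{vx_i\}$. By \Cref{obs:vertconn}, every proper separation of $G''$ of order $<\lambda$ is a $(v,x_i)$-separation of order exactly $\lambda-1$. By \Cref{lem:leftrightsep}, these are all nested between a leftmost and a rightmost one, $(A_2,B_2)$. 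Since $(A_2,B_2)$ is also a separation of $G$, domination gives some $x_j\in B_2\setminus A_2$, and then $vx_j$ crosses all of them.
\item \emph{No dominated link implies a no-instance.} For each undominated $i$, choose a \emph{witness} separation with all other $x_j\in A_i$, not the rightmost minimum one. Then $x_j\in B_i$ forces $x_j\in A_i\cap B_i$, so your auxiliary graph has fewer than $\lambda\ell$ edges. On the resulting independent set the hypothesis of \Cref{lem:disjointseps} holds directly, and the problematic case $u_j\in B_i\setminus A_i$ never arises.
\item \emph{Testing domination.} It takes one flow computation per link, with $N[v]\cup\{x_j: j\ne i\}$ on one side and $N[x_i]$ on the other, which gives the stated running time.
\end{itemize}
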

\begin{proof}
By taking a subset of $L'$, we can without loss of generality assume that $|L'| = 20 \lambda k = \ell$.
We denote the links in $L'$ by $v x_1, \ldots, v x_\ell$.
We start by introducing the sufficient condition for irrelevancy that we use.
We say that a link $v x_i \in L'$ is \emph{dominated} if there is no $(v, x_i)$-separation $(A,B)$ of order $<\lambda$ so that $x_j \in A$ for all $j \neq i$.

We claim that (1) a dominated link is $L'$-irrelevant, (2) if no link in $L'$ is dominated, then $\ins$ is a no-instance, and (3) we can detect if a link is dominated within the claimed running time.
The combination of these claims implies the lemma.

We start with (3).
\begin{claim}
\label{lem:degirrelev:c1}
We can detect if a link $v x_i$ is dominated in time $\OO(\lambda |G|)$.
\end{claim}
\begin{claimproof}
This can be done by using the Ford-Fulkerson algorithm to check if there is a separation $(A,B)$ of order $< \lambda$ so that $N[v] \cup \{x_j \mid j \neq i\} \subseteq A$ and $N[x_i] \subseteq B$.
\end{claimproof}

Then we prove (1).
\begin{claim}
\label{lem:degirrelev:c2}
If a link $v x_i$ is dominated, then it is $L'$-irrelevant.
\end{claim}
\begin{claimproof}
Suppose that $v x_i$ is dominated, and that $S \subseteq L$ is a solution to $\ins$ with $v x_i \in S$.
It suffices to show that there exists $j \neq i$ so that $S \cup \{v x_j\} \setminus \{v x_i\}$ is a solution to $\ins$.

Let $G' = G \cup S$ and $G'' = G' \setminus \{v x_i\}$.
If the vertex connectivity of $G''$ is $\ge \lambda$, then any $j \neq i$ can be chosen, so assume that the vertex connectivity of $G''$ is $< \lambda$.
Every proper separation $(A,B)$ of $G''$ of order $< \lambda$ must be an $(v, x_i)$-separation, because otherwise $(A,B)$ would also be a proper separation of $G'$, but $G'$ is $\lambda$-vertex-connected.
By \Cref{obs:vertconn} and the fact that $G'$ cannot be a clique because $|L'| > k$, the order of any such separation $(A,B)$ must be exactly $\lambda-1$.

By \Cref{lem:leftrightsep}, there exist $(v,x_i)$-separations $(A_1, B_1)$ and $(A_2, B_2)$ of $G''$ of order $\lambda-1$, so that for every $(v,x_i)$-separation $(A,B)$ of $G''$ of order $\lambda-1$, we have $A_1 \subseteq A \subseteq A_2$ and $B_2 \subseteq B \subseteq B_1$.

Because $v x_i$ is dominated and $(A_2,B_2)$ is also a separation of $G$, there exists $j \neq i$ so that $x_j \in B_2 \setminus A_2$.
This means that the link $v x_j$ crosses all $(v,x_i)$-separations of $G''$ of order $\lambda-1$, and in particular, it crosses all proper separations of $G''$ of order $<\lambda$.
Therefore, $G'' \cup \{v x_j\}$ is $\lambda$-vertex-connected, so $S \cup \{v x_j\} \setminus \{v x_i\}$ is a solution to $\ins$.
\end{claimproof}

We finish the proof by proving (2).
\begin{claim}
\label{lem:degirrelev:c3}
If no link in $L'$ is dominated, then $\ins$ is a no-instance.
\end{claim}
\begin{claimproof}
A link $v x_i$ is not dominated if and only if there exists a $(v,x_i)$-separation $(A,B)$ of order $< \lambda$ with $x_j \in A$ for all $j \neq i$.
For every $i \in [\ell]$, fix $(A_i, B_i)$ to be an arbitrary such separation.
Consider a graph on the vertex set $[\ell]$, having an edge between $i$ and $j$ if either $x_i \in B_j$ or $x_j \in B_i$.
Because $x_i \in B_j$ happens only if $x_i \in A_j \cap B_j$, and the $x_i$:s are disjoint, this graph has $< \lambda \cdot \ell$ edges.
By \Cref{lem:avgdegindset}, it has an independent set $X \subseteq [\ell]$ of size $|X| \ge \ell/(8\lambda) > 2k$.

For each $i \in X$, we have that $x_i \in B_i \setminus A_i$, and $x_i \notin B_j$ for $j \in X \setminus \{i\}$.
Now, \Cref{lem:disjointseps} implies that $\ins$ is a no-instance.
\end{claimproof}

The combination of \Cref{lem:degirrelev:c1,lem:degirrelev:c2,lem:degirrelev:c3} finishes the proof.
\end{proof}

For our second subroutine for finding irrelevant links, we need the following algorithmic subroutine.
We note to the reader that it may be motivating to read the proof of \Cref{lem:matchirrelev} before reading the proof of the following lemma in detail.
We also note that the algorithmic idea is similar to that of \Cref{lem:specialsepalgspecial}.

\begin{lemma}
\label{lem:matchirrelevsubroutine}
Let $G$ be a graph, $a,b \in V(G)$ two distinct vertices that are $(\lambda-k)$-vertex-connected for $\lambda,k \ge 0$, $T \subseteq V(G) \setminus \{a,b\}$ a set of terminal vertices, and $s$ an integer.
We can in time $(|T|+1)^k |G|^{\OO(1)}$ test if there exists an $(a,b)$-separation $(A,B)$ of order $< \lambda$ with $|T \cap (B \setminus A)| \le s$.
\end{lemma}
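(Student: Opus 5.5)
We mirror the recursive scheme of \Cref{lem:specialsepalgspecial}: a branching algorithm in which $\lambda$, $a$, $b$, $T$ and $s$ stay fixed, edges incident to $b$ are added to $G$, and the minimum order of an $(a,b)$-separation strictly increases at each recursion level. Since this minimum starts at $\ge \lambda-k$, the recursion depth is at most $k$.

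One step of the recursion goes as follows. If $a$ and $b$ are adjacent, or if the minimum order $\lambda'$ of an $(a,b)$-separation is $\ge \lambda$, we answer ``no''. Otherwise, by \Cref{lem:leftrightsep} we compute in polynomial time the left-most minimum $(a,b)$-separation $(A_1,B_1)$ of order $\lambda'$. Every $(a,b)$-separation of order $\lambda'$ has $B \subseteq B_1$, and $(A_1,B_1)$ is the one with the largest $B$-side.

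If $|T \cap (B_1 \setminus A_1)| \le s$, we answer ``yes''. The justification needs care, because a $\lambda'$-separation with $B\subseteq B_1$ may still have $B\setminus A$ containing vertices of $A_1\cap B_1$. I would instead argue directly that $(A_1,B_1)$ itself is a witness: it has order $\lambda'<\lambda$ and few terminals on its strict right side.

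Otherwise, pick any $s+1$ terminals $t_0,\dots,t_s \in T \cap (B_1 \setminus A_1)$. Any desired separation $(A,B)$ of the current graph must have some $t_i \notin B\setminus A$, i.e.\ $t_i \in A$. So we branch into $s+1 \le |T|+1$ subproblems. The $i$-th subproblem forces $t_i$ into $A$ by adding the edge $a t_i$ (when $t_i\ne a$; the terminals avoid $a,b$).

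Correctness needs two invariants.

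First, adding $a t_i$ preserves exactly the $(a,b)$-separations of $G$ with $t_i\in A$. Such a separation of $G$ has no new edge between $A\setminus B$ and $B\setminus A$, since $a\in A\setminus B$ and $t_i\in A$. Conversely, every separation of the new graph is a separation of $G$. Hence the answer is the OR over the branches.

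Second, the minimum order of an $(a,b)$-separation increases. Suppose $G^{t_i}=G\cup\{at_i\}$ had an $(a,b)$-separation $(C,D)$ of order $\lambda'$. It is also a separation of $G$, so by \Cref{lem:leftrightsep} we have $A_1\subseteq C$. Since $t_i\in N[a]$ in $G^{t_i}$, we get $t_i\in C\setminus D$. I expected this to be the obstacle: here we need the symmetric, right-most statement rather than $A_1\subseteq C$. The fix is to take $(A_2,B_2)$, the right-most minimum separation (maximal $A$), and to branch on terminals in $T\cap(B_2\setminus A_2)$, i.e.\ the terminals that lie strictly right in every minimum separation. If $|T\cap(B_2\setminus A_2)|\le s$, then $(A_2,B_2)$ is itself a witness, so the answer is ``yes''. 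Otherwise, after adding $at_i$ with $t_i\in B_2\setminus A_2$, any order-$\lambda'$ separation $(C,D)$ would satisfy $C\subseteq A_2$ while $t_i\in C$, a contradiction.

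The recursion tree has branching at most $|T|+1$ and depth at most $k$, and each node costs $|G|^{\OO(1)}$. This gives the claimed running time $(|T|+1)^k|G|^{\OO(1)}$.
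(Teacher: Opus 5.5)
Your final argument is correct and is essentially the paper's proof. You take the right-most minimum $(a,b)$-separation $(A_2,B_2)$ and answer yes if $|T\cap(B_2\setminus A_2)|\le s$. Otherwise you branch on terminals $t\in T\cap(B_2\setminus A_2)$ by adding the edge $at$, and progress holds because every minimum separation $(C,D)$ has $C\subseteq A_2$. Restricting to $s+1$ branches and letting the case $\lambda'=\lambda-1$ be resolved one level deeper are harmless variations. For a clean write-up, drop the abandoned $(A_1,B_1)$ attempt and the opening remark about adding edges incident to $b$, since the edges you actually add are incident to $a$.
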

\begin{proof}
We give an algorithm that uses recursion by the parameter $k$, i.e., at each recursion level the value of $k$ decreases.

Let $\lambda'$ be the minimum order of an $(a,b)$-separation.
We can assume that $\lambda-k \le \lambda' < \lambda$.
By \Cref{lem:leftrightsep} there is an $(a,b)$-separation $(A_2, B_2)$ of order $\lambda'$ so that for all $(a,b)$-separations $(A,B)$ of order $\lambda'$ it holds that $A \subseteq A_2$ and $B_2 \subseteq B$.
We can find such a separation in $|G|^{\OO(1)}$ time.
Clearly, if a desired separation of order $\lambda'$ exists, then $(A_2, B_2)$ is such a separation.
Thus, we are done if $\lambda' = \lambda-1$.

Otherwise, we have that $\lambda' < \lambda-1$ and $|T \cap (B_2 \setminus A_2)| > s$.
Therefore, if $(A,B)$ is a desired separation, then for some $t \in T \cap (B_2 \setminus A_2)$ it must hold that $t \in A$.
We try all possibilities by recursion.
For fixed such $t$, let $G^t$ be the graph $G$ with the edge $at$ added.
We observe that the $(a,b)$-separations of $G^t$ correspond to the $(a,b)$-separations $(A,B)$ of $G$ with $t \in A$, so recursing to the graph $G^t$ for all possibilities $t \in T \cap (B_2 \setminus A_2)$ is correct.
To show that the running time of this recursive algorithm is $(|T|+1)^k |G|^{\OO(1)}$, it suffices to show that $a$ and $b$ are $(\lambda'+1)$-vertex-connected in $G^t$, implying that we can decrease $k$ when recursing.

To prove that $a$ and $b$ are $(\lambda'+1)$-vertex-connected in $G^t$, suppose otherwise, and let $(C,D)$ be an $(a,b)$-separation of order $\lambda'$ of $G^t$.
It is also an $(a,b)$-separation of $G$.
We have that $t \in C$ but $t \notin A_2$.
However, then $C \not\subseteq A_2$, which contradicts the fact that for all $(a,b)$-separations $(A,B)$ of order $\lambda'$ of $G$ it holds that $A \subseteq A_2$.
\end{proof}

Then we give a subroutine for finding an irrelevant link among a set of links that form a matching and cross a separation given by \Cref{lem:specialsep}.

\begin{lemma}
\label{lem:matchirrelev}
Let $\ins = (G, L, \lambda, k)$ be an instance, $(A,B)$ a separation of $G$ of order $< \lambda$ so that vertices in $A \setminus B$ are pairwise $\lambda$-vertex-connected, and $L' \subseteq L$ a set of links that (1) each cross $(A,B)$, and (2) form a matching.
If $|L'| \ge 40 \lambda k$, we can in time $2^{\OO(k \log (k+\lambda))} |G|^{\OO(1)}$ either find an $L'$-irrelevant link or conclude that $\ins$ is a no-instance.
\end{lemma}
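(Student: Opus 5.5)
Write the links as $a_i b_i$ for $i \in [\ell]$, $\ell = 40\lambda k$, with $a_i \in A\setminus B$ and $b_i \in B \setminus A$; they are pairwise vertex-disjoint. The plan mirrors \Cref{lem:degirrelev}, with the well-connected side $A\setminus B$ playing the role of the single vertex $v$. We call $a_ib_i$ \emph{dominated} if every separation $(C,D)$ of $G$ of order $<\lambda$ that is crossed by $a_ib_i$ with $A\setminus B$ on the $C$-side is also crossed by some other link of $L'$. Since every proper separation of order $<\lambda$ has all of $A\setminus B$ on one side, up to swapping, this is meaningful. Note that $A \setminus B$ is contained in $C \setminus D$ or in $D\setminus C$ except for at most $\lambda-1$ separator vertices, because the vertices of $A\setminus B$ are pairwise $\lambda$-vertex-connected. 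More precisely, at most $\lambda-1$ of them lie in the separator and the rest lie on one side. The $a_i$ in the separator cause the $O(\lambda)$ loss. The argument has three parts.

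(1) \emph{Dominated links are irrelevant.} Take a solution $S \ni a_ib_i$ and let $G''=G\cup S\setminus\{a_ib_i\}$. As in \Cref{lem:degirrelev:c2}, every proper separation of $G''$ of order $<\lambda$ has order exactly $\lambda-1$ and is an $(a_i,b_i)$-separation. By \Cref{lem:leftrightsep} take the right-most one $(A_2,B_2)$, so that $B_2$ is smallest. We want a $j\neq i$ with $a_j\in A_2\setminus B_2$ and $b_j\in B_2\setminus A_2$; then $a_jb_j$ crosses every such separation, because $A\subseteq A_2$ and $B_2\subseteq B$ for all of them. Domination applied to $(A_2,B_2)$, which is a separation of $G$ with $a_i$ on the $A_2$-side, gives a link $a_jb_j$ crossing it. It crosses in the right orientation because $A\setminus B$ lies essentially on the $A_2$-side. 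The worry is that $a_j$ could lie in the separator. To handle this, strengthen the definition of domination: require that the link crosses with $a_j \in C\setminus D$, $b_j\in D\setminus C$.

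(2) \emph{Testing domination.} The other links may have their $A$-endpoints forced into $C$ but could also sit in the separator, so a single flow computation does not suffice. We reduce to \Cref{lem:matchirrelevsubroutine}. Contract, or make adjacent to a new vertex $a$, the vertex $a_i$; set $b=b_i$; and take as terminals $T=\{b_j: j\ne i\}$. A non-dominated witness is an $(a,b)$-separation in which no other $b_j$ is on the $D$-side strictly, so we test with $s=0$. After first guessing which $a_j$ lie in the separator, the remaining ones are forced to $C$ by adding edges to $a$. The number of such $a_j$ is at most $\lambda-1$; to keep the running time FPT we instead branch only when the left-most/right-most separation argument demands it, as in \Cref{lem:specialsepalgspecial}. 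Since $G$ is $(\lambda-k)$-connected whenever the instance can be a yes-instance (by \Cref{obs:vertconn}), the connectivity gap is at most $k$ and the recursion depth is $k$. Branching over $\le \ell+1$ terminals gives $(\ell+1)^k = 2^{O(k\log(k+\lambda))}$ time, matching the claim.

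(3) \emph{If no link is dominated, we have a no-instance.} Fix for each $i$ a witness $(C_i,D_i)$ with $b_i\in D_i\setminus C_i$ and no other link crossing it. Build an auxiliary graph on $[\ell]$ joining $i,j$ if $b_j\in D_i$, $b_i\in D_j$, or $a_j$ lies in the separator of $(C_i,D_i)$ (or vice versa). If $b_j\in D_i\setminus C_i$ with $j \neq i$, then $a_j$ must not be in $C_i\setminus D_i$, else $a_jb_j$ would cross; so $a_j$ is in the separator. Hence each $i$ contributes fewer than $2\lambda$ conflicts, the graph has $<2\lambda\ell$ edges, and \Cref{lem:avgdegindset} gives an independent set of size $\ge \ell/(16\lambda)>2k$. \Cref{lem:disjointseps} then finishes, using the $b_i$ as private vertices. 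The main obstacle is (1) together with the exact definition of domination: ensuring the replacement link crosses the extremal separation $(A_2,B_2)$ of $G''$ with $a_j$ strictly on the left. This is where the pairwise $\lambda$-connectivity of $A\setminus B$ and the matching property must be used carefully.
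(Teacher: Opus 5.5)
Your proposal has a genuine gap in part (1), which is the obstacle you flag at the end but do not resolve. Domination applied to the right-most separation $(A_2,B_2)$ of $G''$ gives you one index $j$ with $a_j\in A_2\setminus B_2$ and $b_j\in B_2\setminus A_2$. You then conclude that $a_jb_j$ crosses every $(a_i,b_i)$-separation $(C,D)$ of $G''$ of order $\lambda-1$.

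This is correct for $b_j$, since $B_2\subseteq D$ and $C\subseteq A_2$. It fails for $a_j$, because the inclusions go the wrong way. Knowing $a_j\in A_2$ and $a_j\notin B_2$ does not put $a_j$ in $C\setminus D$; it may lie in $C\cap D$, for instance in the separator of the left-most separation $(A_1,B_1)$. Then $a_jb_j$ does not cross $(A_1,B_1)$.

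Your strengthened definition does not fix this. It only keeps $a_j$ out of the separator of the separation you applied domination to, namely $(A_2,B_2)$. Applying it to $(A_1,B_1)$ may produce a different index, and one witness per separation does not by itself give a single link crossing all of them. The $a_j$-endpoint has to be controlled at the \emph{left-most} separation. There you do know that all of $A\setminus B$ lies in $A_1$: indeed $a_i\in A_1\setminus B_1$, and $A\setminus B$ is pairwise $\lambda$-connected in $G$, hence in $G''$. So all but at most $\lambda-1$ vertices of $A\setminus B$ lie in $A_1\setminus B_1$.

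The missing idea is to count witnesses and drop the condition on $a_j$ from the definition. Call $a_ib_i$ dominated if every $(a_i,b_i)$-separation $(C,D)$ of $G$ of order $<\lambda$ has at least $\lambda$ indices $j\neq i$ with $b_j\in D\setminus C$.

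\begin{itemize}
\item \emph{Part (1).} Applied to $(A_2,B_2)$, this gives $\lambda$ indices with $b_j\in B_2\setminus A_2$. Their $a_j$ are distinct by the matching property, so at least one avoids the $(\lambda-1)$-vertex separator $A_1\cap B_1$. That index has $a_j\in A_1\setminus B_1$ and $b_j\in B_2\setminus A_2$, so $a_jb_j$ crosses every such separation.
\item \emph{Part (2).} As written, your testing step does not match your own definition. A non-domination witness in your sense may have $b_j\in D\setminus C$ when $a_j$ sits in the separator, so testing with $s=0$ checks a different condition. Guessing which $a_j$ lie in the separator costs roughly $\binom{\ell}{\lambda-1}$, which is not $2^{\OO(k\log(k+\lambda))}$ when $\lambda$ is large compared to $k$, and the ``branch only when needed'' alternative is not specified. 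With the new definition, non-domination is exactly the question of \Cref{lem:matchirrelevsubroutine} with $T=\{b_j : j\neq i\}$ and $s=\lambda-1$.
\item \emph{Part (3).} This goes through with essentially your counting. There are fewer than $\lambda$ indices with $b_j\in D_i\setminus C_i$ and fewer than $\lambda$ with $b_j\in C_i\cap D_i$. That gives at most $2\lambda\ell$ edges, an independent set of size $\ge \ell/(16\lambda)>2k$, and then \Cref{lem:disjointseps} finishes.
\end{itemize}
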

\begin{proof}
We may assume that $G$ is $(\lambda-k)$-vertex-connected, as adding an edge can increase the vertex connectivity by at most $1$ if it is below $|V(G)|-2$.

By taking a subset of $L'$, we can assume that $|L'| = 40 \lambda k = \ell$.
We denote the links in $L'$ by $a_1 b_1, \ldots, a_\ell b_\ell$, with $a_i \in A \setminus B$ and $b_i \in B \setminus A$ for all $i$.
We say that a link $a_i b_i \in L'$ is \emph{dominated} if for every $(a_i, b_i)$-separation $(C,D)$ of order $< \lambda$, there exists at least $\lambda$ indices $j \in [\ell] \setminus \{i\}$ with $b_j \in D \setminus C$.

We claim that (1) a dominated link is $L'$-irrelevant, (2) if no link in $L'$ is dominated, then $\ins$ is a no-instance, and (3) we can detect if a link is dominated within the claimed running time.
The combination of these claims implies the lemma.

Let's start with (3).

\begin{claim}
\label{lem:matchirrelev:c1}
We can detect if a link $a_i b_i$ is dominated in time $2^{\OO(k \log (k +\lambda))} |G|^{\OO(1)}$.
\end{claim}
\begin{claimproof}
Note that $a_i b_i$ is \emph{not} dominated if and only if there exists an $(a_i, b_i)$-separation $(C,D)$ of order $< \lambda$ with less than $\lambda$ indices $j \in [\ell] \setminus \{i\}$ with $b_j \in D \setminus C$.
Finding out if such a separation exists corresponds to the problem of \Cref{lem:matchirrelevsubroutine}, which can be solved in $|L'|^k |G|^{\OO(1)} = 2^{\OO(k \log (k+\lambda))} |G|^{\OO(1)}$ time.
\end{claimproof}

Then we prove (1).

\begin{claim}
\label{lem:matchirrelev:c2}
If a link $a_i b_i$ is dominated, then it is $L'$-irrelevant.
\end{claim}
\begin{claimproof}
Suppose that $a_i b_i$ is dominated, and that $S \subseteq L$ is a solution to $\ins$ with $a_i b_i \in S$.
It suffices to show that there is $j \neq i$ so that $S \cup \{a_j b_j\} \setminus \{a_i b_i\}$ is a solution to $\ins$.

Let $G' = G \cup S$, and $G'' = G' \setminus \{a_i b_i\}$.
If the vertex connectivity of $G''$ is $\ge \lambda$, then any $j \neq i$ can be chosen, so assume that the vertex connectivity of $G''$ is $< \lambda$.
Every proper separation $(C,D)$ of $G''$ of order $< \lambda$ must be an $(a_i, b_i)$-separation, because otherwise $(C,D)$ would also be a separation of $G'$, but $G'$ is $\lambda$-vertex-connected.
By \Cref{obs:vertconn} and the fact that $G'$ cannot be a clique, the order of any such separation $(C,D)$ must be exactly $\lambda-1$.

By \Cref{lem:leftrightsep}, there exist $(a_i, b_i)$-separations $(C_1, D_1)$ and $(C_2, D_2)$ of $G''$ of order $\lambda - 1$ so that for every proper separation $(C, D)$ of $G''$ of order $\lambda-1$, we have $C_1 \subseteq C \subseteq C_2$ and $D_2 \subseteq D \subseteq D_1$.

Because $a_i b_i$ is dominated, for the separation $(C_2, D_2)$, there exists at least $\lambda$ indices $j \in [\ell] \setminus \{i\}$ with $b_j \in D_2 \setminus C_2$.
Because vertices in $A \setminus B$ are pairwise $\lambda$-vertex-connected, we have that $A \setminus B \subseteq C_1$, and in particular all but at most $\lambda-1$ vertices of $A \setminus B$ are in $C_1 \setminus D_1$.
Combining these two observations, we have that there exists $j \in [\ell] \setminus \{i\}$ with $a_j \in C_1 \setminus D_1$ and $b_j \in D_2 \setminus C_2$.
Therefore, $a_j b_j$ crosses all proper separations of $G''$ of order $< \lambda$, so $G'' \cup \{a_j b_j\}$ is $\lambda$-vertex-connected.
\end{claimproof}

Finally we prove (2).

\begin{claim}
\label{lem:matchirrelev:c3}
If no link in $L'$ is dominated, then $\ins$ is a no-instance.
\end{claim}
\begin{claimproof}
Recall that a link $a_i b_i \in L'$ is not dominated if and only if there exists an $(a_i, b_i)$-separation $(C,D)$ of order $< \lambda$ with less than $\lambda$ indices $j \in [\ell] \setminus \{i\}$ with $b_j \in D \setminus C$.
For each $i \in [\ell]$, fix $(C_i, D_i)$ be an arbitrary such separation.
Consider a graph on the vertex set $[\ell]$, having an edge between $i$ and $j$ if either $b_j \in D_i$ or $b_i \in D_j$.
Because there are $< \lambda$ indices $j$ with $b_j \in D_i \setminus C_i$, and $< \lambda$ indices $j$ with $b_j \in D_i \cap C_i$, this graph has $\le 2\lambda \cdot \ell$ edges.
By \Cref{lem:avgdegindset}, it has an independent set $X \subseteq [\ell]$ of size $|X| \ge \ell/(8 \cdot 2\lambda) > 2k$.

For each $i \in X$, we have that $b_i \in D_i \setminus C_i$, and $b_i \notin D_j$ for $j \in X \setminus \{i\}$.
Now, \Cref{lem:disjointseps} implies that $\ins$ is a no-instance.
\end{claimproof}

The combination of \Cref{lem:matchirrelev:c1,lem:matchirrelev:c2,lem:matchirrelev:c3} concludes the proof.
\end{proof}

\subsection{Putting things together}
We then combine \Cref{lem:specialsep,lem:degirrelev,lem:matchirrelev} into one lemma.

\begin{lemma}
\label{lem:togetherlemma}
Given an instance $\ins = (G,L,\lambda,k)$, we can in $2^{\OO(k \log(k+\lambda))} |G|^{\OO(1)}$ time obtain one of the following conclusions:
\begin{enumerate}
\item $G$ is $\lambda$-vertex-connected,
\item $\ins$ is a no-instance, or
\item return an $L$-relevant set $L' \subseteq L$ of size $|L'| \le \OO(\lambda^2 k^2)$.
\end{enumerate}
\end{lemma}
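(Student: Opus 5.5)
We first dispose of trivial cases. If $G$ is $\lambda$-vertex-connected, we return conclusion~1. If $G$ is not $(\lambda-k)$-vertex-connected (and $G\cup L$ is not a clique), then by \Cref{obs:vertconn} adding $\le k$ links cannot reach connectivity $\lambda$, so we return conclusion~2. The clique case is handled directly: if some solution makes $G$ a clique then $|V(G)|$ is bounded in terms of $k$, and we can brute force. Otherwise, \Cref{lem:specialsepalg} applies and in $2^k|G|^{\OO(1)}$ time gives a proper separation $(A,B)$ of order $<\lambda$ such that the vertices of $A\setminus B$ are pairwise $\lambda$-vertex-connected. Every solution must contain a link crossing $(A,B)$. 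Let $L_0\subseteq L$ be the set of links crossing $(A,B)$. Then $L_0$ is itself $L$-relevant: any solution contains a link of $L_0$, so if a solution containing a link of $L$ exists, one containing a link of $L_0$ exists. It therefore suffices to shrink $L_0$ while preserving relevance.

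We reduce by iteratively deleting irrelevant links, maintaining a current set $L^*\subseteq L_0$ (initially $L_0$) that is $L$-relevant. Relevance is transitive: if $L''$ is $L^*$-relevant and $L^*$ is $L$-relevant, then $L''$ is $L$-relevant. Moreover, a link that is $L''$-irrelevant for a subset $L''\subseteq L^*$ is also $L^*$-irrelevant. Indeed, if a solution contains a link of $L^*$, it either contains a link of $L^*\setminus L''$, which remains after deletion, or a link of $L''$, which can be replaced by a link of $L''$ minus the deleted one. So it suffices to repeatedly find, inside $L^*$, a subfamily on which one of our subroutines produces an irrelevant link, delete that link, and iterate at most $|L|$ times.

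Each round goes as follows.
\begin{enumerate}
\item If some vertex $v$ is incident to at least $20\lambda k$ links of $L^*$, we apply \Cref{lem:degirrelev} to those links. We either obtain an irrelevant link or conclude that $\ins$ is a no-instance.
\item Otherwise every vertex has link-degree $<20\lambda k$ in $L^*$. A greedy maximal matching in the graph formed by $L^*$ then has size at least $|L^*|/(2\cdot 20\lambda k)$, since each matched link blocks fewer than $40\lambda k$ links.
\item If $|L^*|\ge 1600\lambda^2k^2$, this matching has at least $40\lambda k$ links. All of them cross $(A,B)$, so \Cref{lem:matchirrelev} applies and yields an irrelevant link or a no-instance conclusion in time $2^{\OO(k\log(k+\lambda))}|G|^{\OO(1)}$.
\item Otherwise $|L^*|=\OO(\lambda^2k^2)$, and we output $L^*$ as conclusion~3.
\end{enumerate}
The number of rounds is polynomial, so the total time is as claimed.

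The main obstacle I expect is the bookkeeping of relevance under iteration. \Cref{lem:degirrelev,lem:matchirrelev} certify irrelevance only relative to the subfamily they receive, namely the star or the matching, not relative to $L^*$. The composition argument above must therefore be checked carefully, in particular that restricting to a subset of size exactly $20\lambda k$ or $40\lambda k$ inside those lemmas still yields $L^*$-irrelevance. A secondary point is the preliminary reduction to $G$ being $(\lambda-k)$-vertex-connected together with the clique exception in \Cref{obs:vertconn}, which needs a small separate argument.
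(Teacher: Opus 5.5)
Your proposal follows the paper's proof essentially step by step. The shared steps are: the trivial tests; the separation from \Cref{lem:specialsepalg}; the observation that the crossing links form an $L$-relevant set; the composition of relevance, which the paper states in one line and you check correctly; and running \Cref{lem:degirrelev} to exhaustion, then a greedy matching with the same threshold $1600\lambda^2k^2$ feeding \Cref{lem:matchirrelev}. The main argument is correct.

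The one step that is wrong as written is your handling of the clique exception. You claim that if a solution makes $G$ a clique, then $|V(G)|$ is bounded in terms of $k$. This is false. Take $G=K_n$ minus one edge $e$, with $L=\{e\}$, $k=1$ and $\lambda=n$. Then $G$ has connectivity $n-2$, so it is not $(\lambda-k)$-connected, the single link makes it a clique, and $n$ is arbitrary.

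What is true is a bound in terms of $\lambda$. Suppose $G\cup S$ is a clique with $|S|\le k$.
\begin{itemize}
\item Every proper separation $(C,D)$ of $G$ satisfies $|C\setminus D|\cdot|D\setminus C|\le k$, because all those pairs are non-edges of $G$ that must lie in $S$.
\item Hence $|C\setminus D|+|D\setminus C|\le k+1$, so $G$ is $(|V(G)|-k-1)$-connected.
\item Since $G$ is not $(\lambda-k)$-connected, this gives $|V(G)|\le\lambda$.
\end{itemize}
So $|L|<\lambda^2$, and you can simply return $L$ as conclusion~3, or brute force over $\lambda^{\OO(k)}$ subsets.

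The case split is also cleaner if you split on the number of non-edges of $G$ rather than on whether $G\cup L$ is a clique. If $G$ has more than $k$ non-edges, no $G\cup S$ is a clique, and \Cref{obs:vertconn} already yields a no-instance.

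The paper avoids all of this by returning $L'=L$ whenever $\lambda\ge|V(G)|-2$, in which case $|L|\le\binom{|V(G)|}{2}=\OO(\lambda^2)$. For $\lambda\le|V(G)|-3$, the clique exception cannot be reached within $k$ additions starting from a graph that is not $(\lambda-k)$-connected.
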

\begin{proof}
We first check for some easily testable special cases.
We test in $|G|^{\OO(1)}$ time if $G$ is $\lambda$-vertex-connected, and conclude with case~(1) if yes.
Otherwise, we first check if $\lambda \ge |V(G)|-2$, and in that case conclude with case~(3) returning $L' = L$.
Otherwise, we test in $|G|^{\OO(1)}$ time if $G$ is $(\lambda-k)$-vertex-connected, and if not, return that $\ins$ is a no-instance, i.e., case~(2).
This is justified by the fact that adding an edge can increase vertex connectivity by at most one when it is less than $|V(G)|-2$.


Then we use the algorithm of \Cref{lem:specialsepalg} to in time $2^k |G|^{\OO(1)}$ obtain a proper separation $(A,B)$ of $G$ of order $< \lambda$ so that vertices in $A \setminus B$ are pairwise $\lambda$-vertex-connected.
Now let $L_c \subseteq L$ be the links in $L$ that cross $(A,B)$.
Any solution must contain a link crossing $(A,B)$, so if $L_c$ is empty, we can return that $\ins$ is a no-instance, and otherwise $L_c$ is $L$-relevant.

Note that if $L_c'$ is a subset of $L_c$, and a link $uv \in L_c'$ is $L_c'$-irrelevant, then $L_c \setminus \{uv\}$ is $L$-relevant.
In this way we can repeatedly reduce the set $L_c$ by using \Cref{lem:degirrelev,lem:matchirrelev}, as long as we can either find a vertex with at least $20 \lambda k$ incident links in $L_c$, or a matching of size at least $40 \lambda k$ in $L_c$.
This keeps $L_c$ throughout $L$-relevant.

We first apply \Cref{lem:degirrelev} until all vertices have less than $20 \lambda k$ incident links in $L_c$.
After this, we note that if $|L_c| \ge 1600 \lambda^2 k^2$, the greedy method always finds a matching of at least $40 \lambda k$ links in $L_c$, so we can apply \Cref{lem:matchirrelev} to reduce $L_c$ until $|L_c| < 1600 \lambda^2 k^2$.
\end{proof}

Now, \Cref{thm:main} follows by branching.

\thmmain*
\begin{proof}
We design a recursive algorithm, where $k$ decreases at each recursion level.
Let $\ins = (G,L,\lambda,k)$ be the input instance.
If $k=0$, the problem is equivalent to checking if $G$ is $\lambda$-vertex-connected, which can be done in $|G|^{\OO(1)}$ time.

Otherwise, we apply \Cref{lem:togetherlemma} to either (1) conclude that $G$ is $\lambda$-vertex-connected, in which case we can output the empty set as the solution, (2) conclude that $\ins$ is a no-instance, in which case we return this fact, or (3) find an $L$-relevant set $L' \subseteq L$ of size $|L'| \le \OO(\lambda^2 k^2)$.

In the third case, if there exists a solution, then there exists a solution that contains a link from $L'$.
Therefore, for each $uv \in L'$, we recursively solve the instance $\ins^{uv} = (G \cup \{uv\}, L \setminus \{uv\}, \lambda, k-1)$, and if one of them returns a solution $S$, return the solution $S \cup \{uv\}$, and otherwise return that no solution exists.

The recursion tree has degree $\OO(\lambda^2 k^2)$ and depth $k$, so it has size $2^{\OO(k \log(k+\lambda))}$.
The running time at each node of the recursion tree is $2^{\OO(k \log(k+\lambda))} |G|^{\OO(1)}$, so the total running time is $2^{\OO(k \log(k+\lambda))} |G|^{\OO(1)}$.
\end{proof}

\section{Edge connectivity augmentation}
\label{sec:edgeconnaug}
In this section we prove \Cref{thm:edgeconn}.
The outline of the proof is the same as for vertex connectivity augmentation, except the algorithm is even simpler and we can avoid the dependency on $\lambda$ in the running time.

\subsection{Cut with a well-connected left side}
We start with a subroutine to find a cut of order $< \lambda$ with a certain structure.

Let $G$ be a graph and $\lambda$ an integer.
A $\lambda$-Gomory-Hu tree of $G$ is a triple $(T, \gamma, \alpha)$, where $T$ is a tree, $\gamma \colon V(G) \to V(T)$ maps vertices of $G$ to nodes of $T$, $\alpha \colon E(T) \to 2^{E(G)}$ maps edges of $T$ to sets of edges of $G$, and 

\begin{enumerate}
\item for all $u,v \in V(G)$ and $e \in E(T)$, if $\{e\}$ is a $(\gamma(u),\gamma(v))$-cutset of $T$, then $\alpha(e)$ is an $(u,v)$-cutset of $G$ of size $|\alpha(e)| < 
\lambda$, and
\item for all $u,v \in V(G)$, if $u$ and $v$ are not $\lambda$-edge-connected, then there exists $e \in E(T)$ so that $\{e\}$ is an $(\gamma(u), \gamma(v))$-cutset of $T$, and $\alpha(e)$ is a minimum-size $(u,v)$-cutset of $G$.
\end{enumerate}

The existence of $\lambda$-Gomory-Hu trees follows from Gomory-Hu trees~\cite{gomory1961multi}.
The $\lambda$-bounded version was previously defined e.g. by Hariharan, Kavitha, and Panigrahi~\cite{DBLP:conf/soda/HariharanKP07}.

\begin{theorem}[\cite{gomory1961multi}]
\label{thm:kgomoryhutree}
There is an algorithm that, given a graph $G$ and an integer $\lambda$, in time $|G|^{\OO(1)}$ computes a $\lambda$-Gomory-Hu tree of $G$.
\end{theorem}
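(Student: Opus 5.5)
We construct a $\lambda$-Gomory-Hu tree directly from a classical Gomory-Hu tree~\cite{gomory1961multi}. First we compute, in $|G|^{\OO(1)}$ time (for instance with Gusfield's algorithm or the original Gomory-Hu procedure, using $|V(G)|-1$ max-flow computations), an ordinary Gomory-Hu tree $T_0$ on vertex set $V(G)$. In $T_0$, every edge $e=xy$ carries a weight $w(e)$ equal to the minimum $(x,y)$-cut value in $G$, together with a minimum $(x,y)$-cut $(X_e, V(G)\setminus X_e)$. Here $X_e$ is exactly the vertex set of the component of $T_0 - e$ containing $x$. The key property is that for any $u,v$, the minimum $(u,v)$-cut value equals the minimum weight on the $u$--$v$ path in $T_0$, and the cut associated with a minimizing edge is a minimum $(u,v)$-cut.

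We then contract every edge of $T_0$ with weight $\ge \lambda$. Let $T$ be the resulting tree. Its nodes are the connected components of the forest formed by the heavy edges, and its edges are the edges $e$ of $T_0$ with $w(e) < \lambda$. We let $\gamma(v)$ be the component containing $v$, and for each remaining edge $e$ we set $\alpha(e) = E(X_e, V(G)\setminus X_e)$, which has size $w(e) < \lambda$.

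To verify property (1), take an edge $e$ of $T$ that separates $\gamma(u)$ from $\gamma(v)$. Since contraction preserves the separation structure, removing $e$ from $T_0$ also separates $u$ from $v$, so $u$ and $v$ lie on different sides of $X_e$. Hence $\alpha(e)$ is a $(u,v)$-cutset of size $<\lambda$.

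For property (2), suppose $u$ and $v$ are not $\lambda$-edge-connected. Then the minimum weight $w$ on the $u$--$v$ path in $T_0$ is $<\lambda$. Let $e$ be an edge on that path achieving this minimum. Because $w(e)<\lambda$, the edge $e$ survives contraction and lies on the $\gamma(u)$--$\gamma(v)$ path in $T$, so $\{e\}$ is a $(\gamma(u),\gamma(v))$-cutset of $T$. The cut $\alpha(e)$ has size $w(e)$, which equals the minimum $(u,v)$-cut value, so it is a minimum-size $(u,v)$-cutset.

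The only step that needs care is the correctness of the path-minimum property and of the side sets $X_e$ when taken as components of $T_0-e$. This is the standard Gomory-Hu guarantee (the "cut-equivalent" version, not merely the flow-equivalent one), so we simply cite it and make sure we use the cut-equivalent construction.
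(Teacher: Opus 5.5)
Your proposal is correct and follows the paper's approach. The paper gives no argument beyond citing the classical Gomory--Hu tree and leaves the reduction implicit; your derivation supplies that reduction. You take the cut-equivalent tree, contract every edge of weight $\ge \lambda$, let $\gamma$ map each vertex to its contracted node, and let $\alpha(e)$ be the fundamental cut of each surviving edge. Your checks of both properties via the path-minimum property are sound.
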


We use $\lambda$-Gomory-Hu trees to find a cut with a certain structure.

\begin{lemma}
\label{lem:ensuresmalldeg}
Let $G$ be a graph and $\lambda$ an integer.
If $G$ is not $\lambda$-edge-connected, then $G$ has a proper cut $(A,B)$ of order $< \lambda$ so that vertices in $A$ are pairwise $\lambda$-edge-connected.
Furthermore, such a cut can be found in $|G|^{\OO(1)}$ time.
\end{lemma}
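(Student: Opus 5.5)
Compute a $\lambda$-Gomory-Hu tree $(T,\gamma,\alpha)$ of $G$ via \Cref{thm:kgomoryhutree}, then read the desired cut off a leaf of the tree, analogously to the proof of \Cref{lem:specialsep}.

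First, preprocess $T$. If a leaf $\ell$ of $T$ has $\gamma^{-1}(\ell)=\emptyset$, delete it; the properties of a $\lambda$-Gomory-Hu tree survive, since no pair $u,v$ uses a path through $\ell$. Repeat until every leaf has nonempty preimage. Because $G$ is not $\lambda$-edge-connected, property~(2) gives some pair $u,v$ with $\gamma(u)\neq\gamma(v)$, so $T$ still has at least two nodes.

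Now fix a leaf $\ell$ with incident edge $e=\ell p$, and set $A=\gamma^{-1}(\ell)$. We claim the desired cut is $(A,B)$, where $B$ is the set of vertices of $G$ lying in the component of $G\setminus \alpha(e)$ that avoids $A$, suitably chosen. The natural first attempt is $(A,V(G)\setminus A)$, but its order need not be $<\lambda$, since $\alpha(e)$ is only a $(u,v)$-cutset for pairs separated by $e$. Instead, pick $u\in A$ and $v\notin A$ (if none exists, every vertex maps to $\ell$, contradicting property~(2)). Then $\alpha(e)$ is a $(u,v)$-cutset of size $<\lambda$. Let $A'$ be the component side of $G\setminus\alpha(e)$ containing $u$, i.e.\ a $(u,v)$-cut with $E(A',V\setminus A')\subseteq\alpha(e)$. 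Since $\alpha(e)$ separates every $x\in A$ from every $y\notin A$ by property~(1), we get $A'\cap (V\setminus A)=\emptyset$ for the part reachable from $u$. To handle the fact that $A$ may be split by $\alpha(e)$, take $A''$ to be the union of the components of $G\setminus\alpha(e)$ meeting $A$. By property~(1) none of these components contains a vertex outside $A$, so $A''=A$. Hence $(A,V(G)\setminus A)$ has all crossing edges in $\alpha(e)$, its order is $<\lambda$, and it is proper.

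Pairwise $\lambda$-edge-connectivity inside $A$ follows from property~(2). If $x,y\in A$ were not $\lambda$-edge-connected, some tree edge would separate $\gamma(x)$ from $\gamma(y)$, but $\gamma(x)=\gamma(y)=\ell$, a contradiction.

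The running time is polynomial: computing the tree, pruning leaves, and forming $A$ are all polynomial. The main subtle step is the one above, checking that the cut $(A,V\setminus A)$ really has order $<\lambda$. This is exactly where property~(1), applied to all pairs across $e$, is needed.
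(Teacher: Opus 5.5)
Your proof is correct and follows the paper's approach: prune leaves of the $\lambda$-Gomory-Hu tree with empty preimage, take $A=\gamma^{-1}(\ell)$ for a leaf $\ell$ and $B=V(G)\setminus A$. Your component argument via property~(1) supplies the justification of $|E(A,B)|<\lambda$ that the paper leaves implicit; you should drop the aside claiming that $(A,V(G)\setminus A)$ might have order $\ge\lambda$, and the vague ``suitably chosen'' $B$, since your own argument shows this natural cut already works.
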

\begin{proof}
Let $(T,\gamma,\alpha)$ be a $\lambda$-Gomory-Hu tree of $G$.
We can assume that for every leaf $\ell$ of $T$, there exists a vertex $v \in V(G)$ so that $\gamma(v) = \ell$, as otherwise we could delete $\ell$ from $T$.
If $T$ has only one node, then $G$ is $\lambda$-edge-connected.
Otherwise, let $\ell$ be an arbitrary leaf, $A \subseteq V(G)$ the vertices $v$ with $\gamma(v) = \ell$, and $B = V(G) \setminus A$.
We have that $1 \le |A| < |V(G)|$, vertices in $A$ are pairwise $\lambda$-edge-connected, and $|E(A,B)| < \lambda$.
\end{proof}

\subsection{Irrelevant links}
Let $\ins = (G, L, \lambda, k)$ be an instance of the edge connectivity augmentation problem, and $L' \subseteq L$ a subset of the links.
A subset $L'' \subseteq L'$ is \emph{$L'$-relevant} if whenever there exists a solution to $\ins$ that contains a link from $L'$, there exists a solution that contains a link from $L''$.
We say that a link $uv \in L'$ is $L'$-irrelevant if $L' \setminus \{uv\}$ is $L'$-relevant.

For edge connectivity augmentation the situation is simpler than for vertex connectivity augmentation, as there is no need to require the links to vertex-disjoint as in the matching-case of vertex connectivity augmentation.
Therefore, we need only one irrelevancy lemma, which is next.


\begin{lemma}
\label{lem:edgeirrelev}
Let $\ins = (G,L,\lambda,k)$ be an instance, $L' \subseteq L$ a set of links, and $(A,B)$ a cut of $G$ of order $< \lambda$ so that all links in $L'$ cross $(A,B)$ and vertices in $A$ are pairwise $\lambda$-edge-connected.
If $|L'| > 2k$, we can in time $|G|^{\OO(1)}$ either find an $L'$-irrelevant link or conclude that $\ins$ is a no-instance.
\end{lemma}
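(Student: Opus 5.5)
We follow the template of \Cref{lem:degirrelev,lem:matchirrelev}. By taking a subset we assume $|L'| = 2k+1 = \ell$ and write the links as $a_1b_1,\ldots,a_\ell b_\ell$ with $a_i \in A$ and $b_i \in B$; these need not be distinct vertices.

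We call $a_ib_i$ \emph{dominated} if every $(a_i,b_i)$-cut $(C,D)$ of $G$ of order $<\lambda$ has some $j \neq i$ with $b_j \in D$. The plan has three steps.
\begin{enumerate}
\item Dominance can be tested in polynomial time. Add a super-source $s$ joined to $a_i$ by an edge of infinite capacity, and a super-sink joined to $b_i$ and to all $b_j$ with $j \neq i$. One max-flow then decides whether some $(a_i,b_i)$-cut of order $<\lambda$ puts all $b_j$ with $j \neq i$ into $C$. If $b_j = b_i$ for some $j \neq i$, the link is trivially dominated.
\item A dominated link is $L'$-irrelevant.
\item If no link is dominated, the instance is a no-instance.
\end{enumerate}

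For step (2), let $S$ be a solution containing $a_ib_i$, and put $G' = G \cup S$ and $G'' = G' \setminus \{a_ib_i\}$. Every proper cut of $G''$ of order $<\lambda$ is crossed by $a_ib_i$ and has order exactly $\lambda-1$. By \Cref{lem:leftrightcut} applied in $G''$, there are a leftmost cut $(C_1,D_1)$ and a rightmost cut $(C_2,D_2)$ among the minimum $(a_i,b_i)$-cuts. Every cut of order $\lambda-1$ in $G''$ is such a minimum cut, because otherwise $G'$ would not be $\lambda$-edge-connected.

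Since $A$ is pairwise $\lambda$-edge-connected in $G$, hence in $G''$, all of $A$ lies on the same side as $a_i$ in every such cut. So $A \subseteq C_1$.

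Now $(C_2,D_2)$ is also an $(a_i,b_i)$-cut of $G$ of order $\le\lambda-1$, since $G \subseteq G''$. Dominance therefore gives $j \neq i$ with $b_j \in D_2$. Then $a_j \in A \subseteq C_1 \subseteq C$ and $b_j \in D_2 \subseteq D$ for every cut $(C,D)$ of $G''$ of order $\lambda-1$. Hence $a_jb_j$ crosses all deficient cuts, and each of them is deficient by exactly one, so $S \setminus \{a_ib_i\} \cup \{a_jb_j\}$ is a solution. If $G''$ is already $\lambda$-edge-connected, any $j$ works.

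For step (3), for each $i$ pick an $(a_i,b_i)$-cut $(C_i,D_i)$ of $G$ of order $<\lambda$ with $b_j \in C_i$ for all $j \neq i$. Then $b_i \in D_i$ and $b_i \notin D_j$ for $j \neq i$, and we have $\ell > 2k$ such cuts. We then need an edge analogue of \Cref{lem:disjointseps}, proved by uncrossing.

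Choose the cuts to minimize $|D_i|$. If $D_i \cap D_j \neq \emptyset$, submodularity (\Cref{lem:edgesubmod}) applied to $(C_i \cup D_j, D_i \cap C_j)$ and $(C_j \cup D_i, D_j \cap C_i)$ gives total order $<2\lambda$. Both cuts are proper, since they contain $b_i$ and $b_j$ respectively and miss $a_i$ and $a_j$. So one of them has order $<\lambda$ and a strictly smaller $D$-side, a contradiction.

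The $D_i$ are therefore pairwise disjoint, and each cut $(C_i,D_i)$ is proper and needs a crossing link. A link $uv$ crosses $(C_i,D_i)$ only if $u$ or $v$ lies in $D_i$, so one link crosses at most two of these cuts. With $\ell > 2k$ cuts, no solution of size $\le k$ exists.

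The main obstacle is step (2): verifying that the leftmost and rightmost minimum cuts of $G''$ correctly transfer dominance, which is defined in $G$, and handling the case where the cuts of $G''$ have order below $\lambda-1$. The latter is ruled out because removing one link lowers every cut by at most one.
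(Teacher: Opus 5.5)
Your proposal is correct and follows the paper's proof essentially step for step. It uses the same notion of domination, the same leftmost/rightmost minimum-cut argument in $G''$ for irrelevance (using that $a_j$ is $\lambda$-edge-connected to $a_i$, so $a_j \in C_1$), and the same uncrossing of $|D_i|$-minimal cuts to make the $D_i$ pairwise disjoint for the no-instance case.

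One small slip is in step (1). Dominance asks for a cut with every $b_j$, $j \neq i$, on the $a_i$-side $C$, so those $b_j$ must be attached to the source together with $a_i$, and only $b_i$ to the sink. As you wired the network, attaching the $b_j$ to the super-sink forces them into $D$, so it tests the wrong condition.
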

\begin{proof}
We may assume that $|L'| = 2k+1 = \ell$.
Let us denote the links in $L'$ by $a_1 b_1, \ldots, a_\ell b_\ell$, so that $a_i \in A$ and $b_i \in B$ for all $i \in [\ell]$.
We say that a link $a_i b_i \in L'$ is \emph{dominated} if there is no $(a_i, b_i)$-cut $(C,D)$ of order $< \lambda$ so that $b_j \in C$ for all $j \in [\ell] \setminus \{i\}$.
Note that we can test if a link is dominated in $|G|^{\OO(1)}$ time by standard flow computation.

We claim that (1) a dominated link is $L'$-irrelevant and (2) if no link in $L'$ is dominated, then $\ins$ is a no-instance.
We start with (1).

\begin{claim}
\label{lem:edgeirrelev:c1}
If a link $a_i b_i$ is dominated, then it is $L'$-irrelevant.
\end{claim}
\begin{claimproof}
Suppose that $a_i b_i$ is dominated, and $S \subseteq L$ is a solution to $\ins$ with $a_i b_i \in S$.
It suffices to show that there is $j \neq i$ so that $S \cup \{a_j b_j\} \setminus \{a_i b_i\}$ is a solution to $\ins$.

Let $G' = G \cup S$ and $G'' = G' \setminus \{a_i b_i\}$.
If the edge connectivity of $G''$ is $\ge \lambda$, then any $j \neq i$ can be chosen, so assume that the edge connectivity of $G''$ is $< \lambda$.
It is exactly $\lambda-1$, because adding an edge can increase it by at most $1$.

Every proper cut $(C,D)$ of $G''$ of order $< \lambda$ must be a $(a_i,b_i)$-cut, because otherwise $(C,D)$ would also be a proper cut of $G'$ of order $< \lambda$.
By \Cref{lem:leftrightcut}, there exists $(a_i, b_i)$-cuts $(C_1, D_1)$ and $(C_2, D_2)$ of $G''$ of order $\lambda-1$, so that every $(a_i, b_i)$-cut $(C,D)$ of $G''$ of order $\lambda-1$ has $C_1 \subseteq C \subseteq C_2$ and $D_2 \subseteq D \subseteq D_1$.

Because $a_i b_i$ is dominated, there exists a link $a_j b_j$ with $j \in [\ell] \setminus \{i\}$ and $b_j \in D_2$.
Because $a_i$ and $a_j$ are $\lambda$-edge-connected, we have $a_j \in C_1$.
This means that $a_j b_j$ crosses all $(a_i, b_i)$-cuts of $G''$ of order $\lambda-1$, so it crosses all proper cuts of $G''$ of order $\lambda-1$, so $G'' \cup \{a_j b_j\}$ is $\lambda$-edge-connected.
Therefore, $S \cup \{a_j b_j\} \setminus \{a_i b_i\}$ is a solution to $\ins$.
\end{claimproof}

We then show (2).

\begin{claim}
\label{lem:edgeirrelev:c2}
If no link in $L'$ is dominated, then $\ins$ is a no-instance.
\end{claim}
\begin{claimproof}
Because no link in $L'$ is dominated, for each $a_i b_i \in L'$, there exists an $(a_i, b_i)$-cut $(C_i, D_i)$ of order $< \lambda$ so that $b_j \in C_i$ for all $j \neq i$.
Let $(C_1, D_1), \ldots, (C_\ell, D_\ell)$ be such cuts, selected so that each minimizes $|D_i|$.
We claim that the sets $D_i$ are pairwise disjoint.

If $D_i$ and $D_j$ overlap for $i \neq j$, then consider the cuts $(C_i', D_i') = (C_i \cup D_j, D_i \cap C_j)$ and $(C_j', D_j') = (C_j \cup D_i, D_j \cap C_i)$.
By submodularity (\Cref{lem:edgesubmod}),
\[|E(C_i', D_i')| + |E(C_j', D_j')| \le |E(C_i, D_i)|+|E(C_j, D_j)| < 2\lambda.\]

It follows that either $|E(C_i', D_i')| < \lambda$ or $|E(C_j', D_j')| < \lambda$.
In the former case, we could replace $(C_i, D_i)$ by $(C_i', D_i')$, and in the latter case replace $(C_j, D_j)$ by $(C_j', D_j')$, in both cases contradicting the selection by minimizing $|D_i|$ for all $i \in [\ell]$.
Therefore, the sets $D_i$ for $i \in [\ell]$ are pairwise disjoint.

Now, for each $(C_i, D_i)$, any solution must have a link that crosses $(C_i, D_i)$.
In particular, the solution must contain a link touching every $D_i$, and as they are disjoint and there are $\ell > 2k$ of them, any solution must have more than $k$ links, so $\ins$ is a no-instance.
\end{claimproof}

\Cref{lem:edgeirrelev:c1,lem:edgeirrelev:c2} conclude the proof.
\end{proof}

\subsection{Putting things together}

We then put the ingredients together to give our algorithm.

\thmedgeconn*
\begin{proof}
We give a recursive algorithm where $k$ decreases at each recursion level.
Let $\ins = (G, L, \lambda, k)$ be the input instance.
If $k=0$, the problem is equivalent to checking if $G$ is $\lambda$-edge-connected, which can be done in $|G|^{\OO(1)}$ time.

Otherwise, we apply \Cref{lem:ensuresmalldeg} to find in $|G|^{\OO(1)}$ time a proper cut $(A,B)$ of $G$ of order $< \lambda$ so that vertices in $A$ are pairwise $\lambda$-edge-connected.
Let $L' \subseteq L$ be the links that cross $(A,B)$.
We have that $L'$ is an $L$-relevant set.
By repeatedly applying \Cref{lem:edgeirrelev} to either find an $L'$-irrelevant link or conclude that $\ins$ is a no-instance, we can either conclude that $\ins$ is a no-instance, or reduce $L'$ until $|L'| \le 2k$, while keeping $L'$ an $L$-relevant set.
In the former case we can return immediately.

In the latter case, if there exists a solution, there exists a solution that contains a link from $L'$.
Therefore, for each $uv \in L'$, we recursively solve the instance $\ins^{uv} = (G \cup \{uv\}, L \setminus \{uv\}, \lambda, k-1)$, and if one of them returns a solution $S$, return the solution $S \cup \{uv\}$, and otherwise return that no solution exists.

The recursion tree has degree $2k$ and depth $k$, so it has size $2^{\OO(k \log k)}$.
The running time at each node is $|G|^{\OO(1)}$, so the total running time is $2^{\OO(k \log k)} |G|^{\OO(1)}$.
\end{proof}

\section{Conclusions}
\label{sec:conclusions}
We established the fixed-parameter tractability of the two most basic NP-hard variants of connectivity augmentation problems.
We now conclude with open problems.

An obvious open question is whether vertex connectivity augmentation is FPT parameterized by $k$ only.
We do not have a conjecture on which direction it should go, both directions seem equally plausible to us.
Another question is whether the algorithms of \Cref{thm:main,thm:edgeconn} extend to the setting of weighted links, i.e., where we have to pick at most $k$ links but also minimize their total weight.
Our algorithms seem to extend to this problem in the limited setting where the weights are integers bounded by a function of the parameters, simply by running the irrelevancy lemmas for each weight class separately.
However, it is not clear whether the more general setting where the weights can be large is FPT.



\bibliographystyle{alpha}
\bibliography{main}

\end{document}